\documentclass[11pt,a4paper]{article}
\usepackage{setspace}

\usepackage[margin=1.1in]{geometry}

\usepackage[T1]{fontenc}
\usepackage{amsmath}
\usepackage{amssymb}
\usepackage{bm,lscape}
\usepackage{longtable}
\usepackage{mathrsfs,dsfont}
\usepackage{graphicx}
\usepackage{eurosym}
\usepackage{tablefootnote}
\usepackage{footmisc}
\usepackage{subfig}
\usepackage{listings}
\usepackage{adjustbox}
\usepackage{booktabs}
 \usepackage[table,xcdraw]{xcolor}
\RequirePackage{natbib}
\usepackage[english]{babel}
\usepackage{amsmath,amsthm}
\usepackage{amssymb}
\usepackage{multirow}
\usepackage{bbm}
\usepackage{bm}
\usepackage{hyperref}
\usepackage{mathrsfs}
\usepackage{algorithm}
\usepackage{dsfont}
\usepackage{color}

\pagestyle{plain}

\usepackage[noend]{algpseudocode}

\makeatletter
\def\BState{\State\hskip-\ALG@thistlm}
\makeatother


\theoremstyle{definition}

\newtheorem{corollary}{Corollary}
\newtheorem{theorem}{Theorem}

\def\bSig\mathbf{\Sigma}










\begin{document}
\doublespacing

\begin{center}
\textmd{\LARGE{\bfseries{{An econometric analysis of the Italian cultural supply }}}}
\end{center}
\medskip

\begin{center}
\large{Consuelo R. Nava\footnote{Department of Economics and Political Sciences, University of Aosta Valley, Italy and Department of Economic Policy, Catholic University of the Sacred Heart, Milan, Italy, \href{mailto:c.nava@univda.it}{c.nava@univda.it} },  
Maria Grazia Zoia\footnote{Department of Economic Policy, Catholic University of the Sacred Heart, Milan, Italy, \href{mailto:maria.zoia@unicatt.it}{maria.zoia@unicatt.it}}}
\end{center}


\smallskip

\begin{quote}
\begin{center}
\noindent {\sl Abstract} \end{center}
Price indexes in time and space is a most relevant topic in statistical analysis from both the methodological and the application side. In this paper a price index providing a novel and effective solution to price indexes over several periods and among several countries, that is in both a multi-period and a multilateral framework,  is devised. The reference basket of the devised index is the union of the intersections of the baskets of all periods/countries in pairs. As such, it provides a broader coverage than usual indexes. Index closed-form expressions and  updating formulas are provided and properties investigated. Last, applications with real and simulated data provide evidence of the performance of the index at stake.

\vspace{10pt}
\noindent {\sl Keywords:} 
multi-period price index, multilateral price index, stochastic approach, updating process, OLS.
\par
\noindent {\sl JEL code: C43; E31; C01.} 
\vspace{5pt}

\par
\end{quote}


\newpage

\section{Introduction}{\label{sec: intro}}
Multi-period and multilateral price indexes, used to compare sets of commodities over time and across countries, respectively, are of prominent interest for statisticians  \citep[see, e.g.,][]{Biggeri2010}. Several approaches to the problem have been carried out in the literature. 

One of them is the axiomatic approach \citep[see, e.g.,][and the references quoted therein]{balk1995axiomatic}, which rests on the availability of both quantities and prices, dealt with as independent random variables, and aims at obtaining price indexes, to enjoy suitable properties \citep[][]{fisher1921best, fisher1922making}.

A second approach hinges on the economic theory\footnote{This approach is also known as preference field approach or functional approach \citep{divisia1926indice}.} \citep[see, among others,][for a review]{diewert1979economic, caves1982economic} and rests on the idea that consumption choices come from the optimization of a utility function under budget constraints. Here, prices play the role of independent variables, while  quantities arise as solutions to an optimization problem in accordance with the decision maker's preference scheme.

A third approach is the stochastic one \citep[see][for a  review] {clements2006stochastic, diewert2010stochastic}, which can be traced back to the works of \cite{jevons1863serious, jevons1869depreciation} and \cite{edgeworth1887report, edgeworth1925memorandum}. Thanks to \citet{balk1980method, clements1987measurement}, this approach has recently been reappraised, and its role in inflation measurements duly acknowledged \citep[see, e.g.,][and references quoted therein]{zahid2010measuring}. In this framework, prices are assumed to be affected by  measurement errors whose bias effect must be duly minimized. 
The stochastic approach (hereafter, SA) turns out to be somewhat different from other approaches, insofar as it is closely related to regression theory \citep{theil1960best, clements1987measurement}. In fact, the SA enables the construction of tests and confidence intervals for price indexes, which provide useful pieces of information \citep{clements2006stochastic}. Furthermore, the SA has less limits than other approaches\footnote{The SA, differently from the index number theory originating from \cite{theil1967economics} does not need to account for the economic importance of single prices.} and clears the way to further extensions, as shown in \cite{diewert2004stochastic, diewert2005weighted, silver2009hedonic, rao2004country}. 

In this paper, we devise a multi-period/multilateral price index, MPL index henceforth, within the stochastic framework. The derivation of the MPL index, which is the solution to an optimization problem, calls for quantities and values of the commodities (not prices), like \citet{walsh1901measurement}. The reference basket, namely the set of commodities for all periods/countries, is made up of the union of the intersections of all the couples of year/country baskets in pairs. Namely, the price index of a commodity can be always computed once the latter is present in at least two periods/countries. Thus, the reference basket turns  out to be more representative than the ones commonly used by the majority of statistical agencies, which either align the reference basket to that of the first period, or make it tally with the intersection of the commodity sets of all periods/countries. Eventually, such a reference basket is likely to be scarcely representative of the commodities present in each period/country.	 
In this sense, just like hedonic \citep{pakes2003reconsideration}, GESKS \citep{balk2012price} and country/time-product-dummy (CPD/TPD) approaches with incomplete price tableaus \citep{rao2004country}, the MPL index does not drop any observation on the account of having no counterpart in the reference basket. The lack of a commodity in a period/country $t$ requires putting both its quantity and value equal to zero in that period/country. However, unlike the aforesaid approaches, the MPL index is built on quantities and values, not on prices. Neither any preliminary computation of binary price indexes, as in the GESKS approach \citep{ivancic2011scanner}, nor the use of any type of weighting matrix for dealing with missing values or quantities, as in the case of CPD/TPD indexes, are needed.

The updating of the MPL index is easy to accomplish and suitable formulas, tailored to the multi-period or multilateral nature of the data, are provided. 
In fact, while the inclusion of fresh values and quantities, of a set of commodities corresponding to an extra period, does not affect the previous values of the MPL index, the inclusion of a new country affects all former MPL indexes. Hence, two updating formulas have been proposed for the MPL index: one for the multi-period case and another for the multilateral case. A comparison of the said index to CPD/TPD index -- a multilateral/multi-period index that, like the MPL one, can be read as a solution to an optimization problem -- provides evidence of an easier implementation and greater accuracy of the former index.

To sum up, a threefold  novelty characterizes the paper. First, a price index, which proves effective either for the multi-period or the multilateral case, is devised.  
Second, updating formulas tailored to the multilateral and the multi-period version of the index are provided.
Third, the grater simplicity of use and accuracy of the said index is highlighted in comparison with well-known standard multilateral/multi-period indexes. 

The paper is organized as follows. In Section~\ref{sec: meth}, we briefly go over the SA to price indexes and point out how several indexes are solutions to optimization problems. In Section~\ref{sec:mpl}, within the SA, we devise the MPL index according to a minimum-norm criterion as well as its updating formulas for the multi-period and the multilateral cases, respectively. 
Section~\ref{subsec:prop} is devoted to the properties of the MPL index. Section~\ref{sec:empirical} provides an application of the MPL index to the Italian cultural supply data to shed light on its potential as both a multi-period and a multilateral index. To gain a better insight into the performance of the MPL index, a comparison is made of the CPD/TPD indexes by using both real and simulated data. Section~\ref{sec:conclusion} completes the paper with some concluding remarks and hints. For the sake of easier readability, an Appendix has been added with proofs and technicalities.  

\section{The stochastic approach revisited}{\label{sec: meth}}
In this section, we review the main features of the SA and show how several well-known price indexes can be obtained within this framework. 

The SA works out price indexes as solutions to an optimization problem consisting in finding a line (more generally a plane or a hyperplane) which lies as closely as possible to the points whose coordinates are the $N$ commodity prices in the $T$ periods under examination. Following \citet{theil1967economics} and \citet[][Ch. 3]{selvanathan1994index} and by assuming for exposition purposes $T=2$, the idea underlying the SA is that, in both the periods taken into account, all prices move almost proportionally. Namely,
\begin{equation}\label{eq:0}
p_{i2}\,\approx \lambda \,p_{i1}\,\,\,\,\,\,\forall\,i=1,2,\dots,N
\end{equation}
where  $p_{it}$  is the price of commodity $i$ in  period $t$ ($t=\{1,2\}$) and $\lambda$ is a scalar factor acting as price index. Eq.~\eqref{eq:0} can be  conveniently rewritten as follows 
\begin{equation}\label{eq:00}
\frac{p_{i2}}{p_{i1}}= \lambda +\eta_i\,\,\,\,\,\,\,\forall\,i=1,2,\dots,N
\end{equation} 
where $\eta_i$ are error terms which, as a rule, are assumed to be non-systematic and uncorrelated between commodities with variances, that can be either constant (with respect to commodities), that is
\begin{equation}
\label{eq:14}
var(\eta_i)=\sigma^2, \,\,\,\,\,\,\,\forall\,i=1,2,\dots,N,
\end{equation}
or not. In the latter case, variances are frequently specified as inversely proportional to the commodity budget share, namely 
\begin{equation}
\label{eq:15}
var(\eta_i)=\sigma^2\frac{1}{w_{i1}},\,\,\,\,\,\,\,\forall\,i=1,2,\dots,N.
\end{equation}
Here $w_{i1}=\frac{p_{i1}q_{i1}}{\boldsymbol{p}_1'\boldsymbol{q}_1}$ and $\boldsymbol{p}_1'\boldsymbol{q}_1$ is the total expenditure in the base period ($t=1$), being $\boldsymbol{p}_t$ and $\boldsymbol{q}_t$ the vectors of prices and quantities at time $t$. 
Eq.~\eqref{eq:00} can be written in compact notation as follows
\begin{equation}
\boldsymbol{p}_2*\boldsymbol{p}_1^{-H}=\lambda \boldsymbol{u}+\boldsymbol{\eta}
\label{eq:13}
\end{equation}
where $\boldsymbol{u}$ is an $N$-dimensional unit vector, $\boldsymbol{p}_1^{-H}$ is defined as the vector of the reciprocals of the non-null entries of  $\boldsymbol{p}_1$ and zeros otherwise, that is  
\begin{equation}
\label{eq:666}
\underset{(N,1)}{\boldsymbol{p}_1^{-H}}=\left[p_{i1}\right]^{-H}=\begin{cases} p_{i1}^{-1}\,\,\,\,\,\,\mbox{if}\,\,\,p_{i1}\neq 0
\\
0 \,\,\,\,\,\,\,\,\,\,\,\,\mbox{otherwise}
\end{cases}
\end{equation}
and  $\boldsymbol{\eta}$ is a vector of $N$ non-systematic and spherical random variables, unless otherwise specified.
Taking the Hadamard product of both sides of Eq.~\eqref{eq:13} by $\boldsymbol{p}_1$ , yields the following  
\begin{equation}
\label{eq:2}
 \boldsymbol{p}_2*\boldsymbol{p}_1^{-H}*\boldsymbol{p}_1=\lambda \boldsymbol{u}*\boldsymbol{p}_1+\boldsymbol{\eta}*\boldsymbol{p}_1=\lambda \boldsymbol{p}_1+\boldsymbol{\varsigma}.\footnote{The error terms $\boldsymbol\varsigma$ of Eq.~\eqref{eq:2}  are no longer homoschedastic. In particular, if the assumption in Eq.~\eqref{eq:14} holds, then the dispersion matrix of $\boldsymbol \varsigma$ takes the form                                                                          
\[\mathds{E}(\boldsymbol{\varsigma}\boldsymbol{\varsigma}')=\sigma^2\boldsymbol{D}\,\,\,\mbox{where}\,\,\,\boldsymbol{D}=[(\boldsymbol{p}_1*\boldsymbol{p}_1)\boldsymbol{u}']*\boldsymbol{I}_N\]
where $\boldsymbol{u}$ is the unit vector and $\boldsymbol{I}_N$ denotes the $N$-dimensional identity matrix. 
If Eq.~\eqref{eq:15} holds, then 
\[\mathds{E}(\boldsymbol{\varsigma}\boldsymbol{\varsigma}')=\sigma^2\boldsymbol{p}_1'\boldsymbol{q}_1\tilde{\boldsymbol{D}}\,\,\,\mbox{where}\,\,\,\tilde{\boldsymbol{D}}=[(\boldsymbol{p}_1*\boldsymbol{q}_1)^{-H}\boldsymbol{u}']*\boldsymbol{I}_N.\]}
\end{equation}
The appeal of the SA lays in the possibility of evaluating, besides point estimates,  also price index standard errors which increase as the relative price variability increases. The computation of price index standard errors allow to verify the intuitive notion that the less  prices move proportionally, the less precise are price index estimators. Further, standard errors prove useful to build confidence intervals for price indexes. 

The following theorem shows that several well-known price indexes can be seen as offspring of Eq.~\eqref{eq:2}. 
\begin{theorem}\label{th:1}
\textit{
Let $\boldsymbol{p}_t$ be a vector of the prices of $N$ commodities at time $t$ and $\boldsymbol{q}_t$ be the vector of the corresponding quantities. The Laspeyeres, Paasche, Marschall-Edgeworth and Walsh indexes are solutions to an optimization problem of the form
\begin{equation}
\min_{\lambda}\,\,\, || \boldsymbol{p}_2\,- \lambda  \boldsymbol{p}_1||_N=\min_{\lambda}\,\,\, ||\boldsymbol{e}||_N
\end{equation}
 where $||\,\cdot\,||_{N}=(\boldsymbol{e}'\boldsymbol{A}\boldsymbol{e})^{\frac{1}{2}}$  stands for a (semi)norm of the reference price index and $\boldsymbol{A}$ is a properly chosen non-negative definite matrix}.
\end{theorem}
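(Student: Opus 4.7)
The plan is to reduce the theorem to a single unconstrained quadratic minimisation problem in $\lambda$ and then, for each of the four indexes, exhibit a non-negative definite matrix $\boldsymbol{A}$ for which the optimum coincides with that index. Concretely, for any symmetric non-negative definite $\boldsymbol{A}$ the objective
\begin{equation*}
\Phi(\lambda)=(\boldsymbol{p}_2-\lambda\boldsymbol{p}_1)'\boldsymbol{A}(\boldsymbol{p}_2-\lambda\boldsymbol{p}_1)
\end{equation*}
is convex in $\lambda$, so the first-order condition $-2\boldsymbol{p}_1'\boldsymbol{A}(\boldsymbol{p}_2-\lambda\boldsymbol{p}_1)=0$ is necessary and sufficient (provided $\boldsymbol{p}_1'\boldsymbol{A}\boldsymbol{p}_1\neq 0$), and yields the closed-form solution
\begin{equation*}
\hat{\lambda}(\boldsymbol{A})=\frac{\boldsymbol{p}_1'\boldsymbol{A}\boldsymbol{p}_2}{\boldsymbol{p}_1'\boldsymbol{A}\boldsymbol{p}_1}.
\end{equation*}
This is the step I would write down first; it also clarifies the role of $\boldsymbol{A}$ as the Gram matrix of the (semi)norm $\|\cdot\|_N$.

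Next I would simply match each well-known index to the right $\boldsymbol{A}$, choosing in every case a diagonal (hence automatically non-negative definite) matrix so that the second-order condition is trivial. Specifically, taking
\begin{equation*}
\boldsymbol{A}_L=\mathrm{diag}\!\left(\frac{q_{i1}}{p_{i1}}\right),\quad
\boldsymbol{A}_P=\mathrm{diag}\!\left(\frac{q_{i2}}{p_{i1}}\right),\quad
\boldsymbol{A}_{ME}=\mathrm{diag}\!\left(\frac{q_{i1}+q_{i2}}{p_{i1}}\right),\quad
\boldsymbol{A}_W=\mathrm{diag}\!\left(\frac{\sqrt{q_{i1}q_{i2}}}{p_{i1}}\right),
\end{equation*}
a direct substitution into $\hat{\lambda}(\boldsymbol{A})$ telescopes the factors $p_{i1}$ in the denominators and recovers, respectively, the Laspeyres ratio $\boldsymbol{p}_2'\boldsymbol{q}_1/\boldsymbol{p}_1'\boldsymbol{q}_1$, the Paasche ratio $\boldsymbol{p}_2'\boldsymbol{q}_2/\boldsymbol{p}_1'\boldsymbol{q}_2$, the Marshall--Edgeworth ratio $\boldsymbol{p}_2'(\boldsymbol{q}_1+\boldsymbol{q}_2)/\boldsymbol{p}_1'(\boldsymbol{q}_1+\boldsymbol{q}_2)$, and the Walsh ratio $\boldsymbol{p}_2'\sqrt{\boldsymbol{q}_1\!*\!\boldsymbol{q}_2}/\boldsymbol{p}_1'\sqrt{\boldsymbol{q}_1\!*\!\boldsymbol{q}_2}$. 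Each verification is a two-line calculation using the Hadamard-product identity $\boldsymbol{x}'\mathrm{diag}(\boldsymbol{d})\boldsymbol{y}=\boldsymbol{d}'(\boldsymbol{x}*\boldsymbol{y})$.

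Finally I would remark that since $\boldsymbol{A}$ is only required to be non-negative definite, the functional $\|\boldsymbol{e}\|_N=(\boldsymbol{e}'\boldsymbol{A}\boldsymbol{e})^{1/2}$ is in general a seminorm on $\mathbb{R}^N$, which justifies the parenthetical ``(semi)'' in the statement and explains why the same optimisation framework accommodates different weighting schemes. The only substantive obstacle I anticipate is cosmetic rather than mathematical: one must assume prices are strictly positive so that all diagonal entries above are well defined (and consequently $\boldsymbol{p}_1'\boldsymbol{A}\boldsymbol{p}_1>0$, ensuring the minimiser is unique). Given the SA setting, this is innocuous; the whole proof amounts to solving one scalar normal equation and then verifying four algebraic identities.
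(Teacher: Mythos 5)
Your proposal is correct: the opening step (convexity of $\Phi(\lambda)$, the normal equation $\boldsymbol{p}_1'\boldsymbol{A}(\boldsymbol{p}_2-\lambda\boldsymbol{p}_1)=0$, and the closed form $\hat{\lambda}(\boldsymbol{A})=\boldsymbol{p}_1'\boldsymbol{A}\boldsymbol{p}_2/\boldsymbol{p}_1'\boldsymbol{A}\boldsymbol{p}_1$) is exactly the paper's Eq.~(A.2), and each of your four diagonal matrices does reproduce the intended index. Where you diverge is in the choice of $\boldsymbol{A}$: the paper takes rank-one outer products of quantity vectors, $\boldsymbol{A}=\boldsymbol{a}\boldsymbol{a}'$ with $\boldsymbol{a}\in\{\boldsymbol{q}_1,\boldsymbol{q}_2,\boldsymbol{q}_1+\boldsymbol{q}_2,(\boldsymbol{q}_1*\boldsymbol{q}_2)^{1/2}\}$, so that $\hat{\lambda}=(\boldsymbol{p}_2'\boldsymbol{a})(\boldsymbol{a}'\boldsymbol{p}_1)/(\boldsymbol{p}_1'\boldsymbol{a})(\boldsymbol{a}'\boldsymbol{p}_1)=\boldsymbol{p}_2'\boldsymbol{a}/\boldsymbol{p}_1'\boldsymbol{a}$ by cancellation of the scalar $\boldsymbol{a}'\boldsymbol{p}_1$, whereas you take $\boldsymbol{A}=\mathrm{diag}(a_i/p_{i1})$ and let the $p_{i1}$ telescope entrywise. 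Each choice buys something: the paper's matrices involve only quantities (arguably the more natural weighting) and need no positivity assumption on prices, but are rank one, so $\|\cdot\|_N$ is a genuinely degenerate seminorm; yours are generically positive definite (a true norm when all quantities are positive), but require $p_{i1}>0$ and embed base-period prices into the metric itself. Since the theorem only asks for \emph{some} non-negative definite $\boldsymbol{A}$, both instantiations prove the claim; your remark on the ``(semi)'' qualifier and on strict positivity of $\boldsymbol{p}_1$ is apt and, if anything, slightly more careful than the paper's own treatment.
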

\begin{proof}
See Appendix~\ref{app:proofs1}.
\end{proof}

\section{The MPL index as solution to an optimization problem }\label{sec:mpl}
In this section, a multi-period/multilateral price index is derived in the wake of the SA introduced in the previous section. The construction of this kind of index, MPL index hereafter, poses several issues, like the choice of the reference basket and its updating.
When the prices of commodity sets in two periods/countries are compared, the reference basket, $K_{\tau}$, is generally set to be a subset of the commodities of the first period/country ($K_{\tau}=K_{1}$), which is also assumed to be representative for the other period/country. 

The simplest solution is to take the ``intersection'' of the two baskets as reference basket, ($K_{\tau}= K_{1}\cap K_{2}$). When the comparison is among more than two periods/countries, statistical agencies generally align the reference basket to that of the first period/country ($K_{\tau}=K_{1}$), which turns out to play the role of base period/country.
Of course, this choice is somewhat arbitrary as it leaves open the basket updating problem  and it does not take into account links among baskets corresponding to couples, triples, $\dots$, of periods. An alternative approach sets the reference basket as the intersection of all the commodities considered in each single period ($K_{\tau}= \bigcap_{t} K_{t}$). As a result, the reference basket is likely to be partially representative of those commodities which are peculiar to each period. 

Taking the SA as the reference frame, we derive a multi-period/multilateral price index, satisfying a minimum-norm criterion, whose reference basket -- over a set of periods or across a set of countries -- is  the union of the intersections of the commodity baskets of various periods/countries, taken in pairs.\footnote{Some similarities arise with the chaining rule \citep{forsyth1981theory, von2001chain}. This is a specific type of temporal aggregation method based on the use of the complete time series when computing a price index from $0$ to $t$. The resulting price index is a measure of the cumulated effect of adjacent periods from 0 to 1, 1 to 2, $\dots$, $t-1$ to $t$. However, chain indexes leave unresolved the reference basket updating and are not applicable in a multilateral perspective, differently from the MPL index. Chain indexes compare the current and the previous periods in order to evaluate the evolution over many periods. For a comparison of this approach with the fixed base one see \cite{diewert2001consumer}} 
Such a reference basket proves to be an effective solution for several reasons. First, it is broader and more representative than the ones built on the intersection of  commodities which are present in all periods/countries. Second, the price index of a commodity is well defined, provided the latter is present in at least two baskets. The computation of the said price index requires quantities and values of the commodities, not their prices. The lack of a commodity in a given period/country $t$  simply entails setting both its quantity and value equal to zero in that period/country.
Figure~\ref{fig:1} shows the reference basket corresponding to the aforementioned approaches for the case of two and three periods/countries.
%
%
\begin{figure}[h!]
\begin{center}
\begin{doublespace}
\includegraphics[scale=0.27]{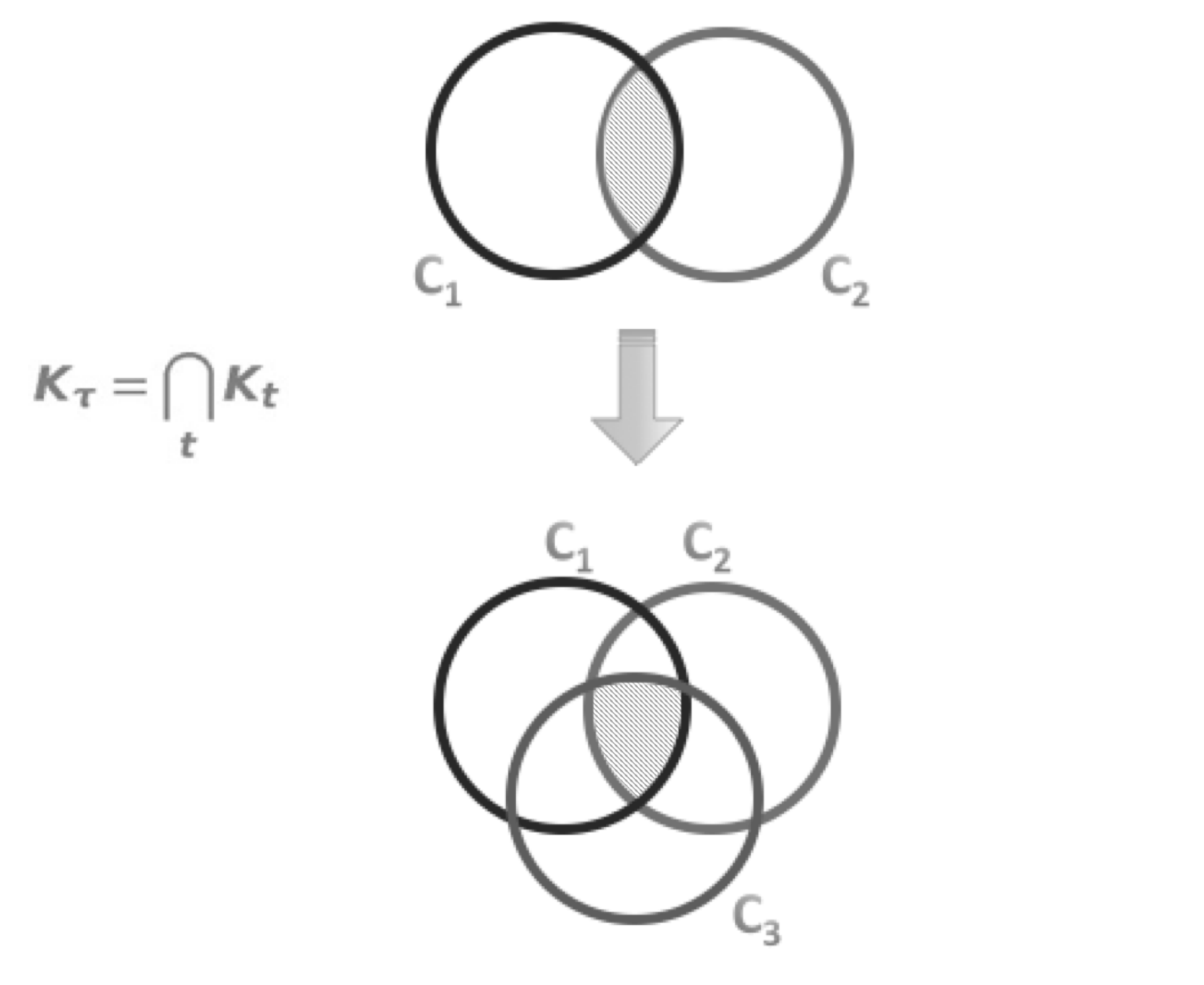} 
\includegraphics[scale=0.27]{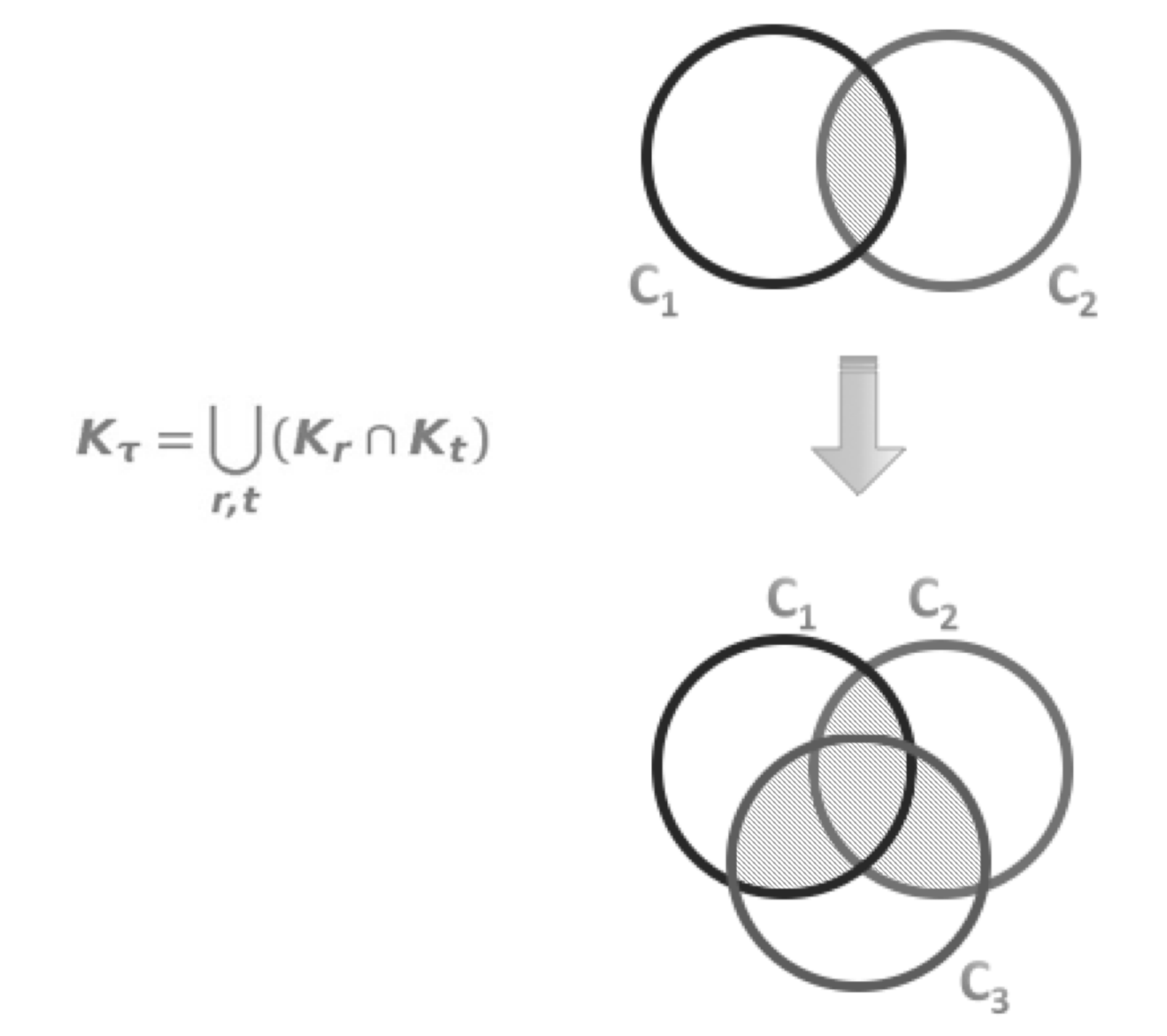}
\caption{
 The left-hand side panel shows the reference basket corresponding to the ``traditional'' approach. The right-hand side panel shows the reference basket corresponding to the MPL index. }
\label{fig:1}
\end{doublespace}
\end{center}
\end{figure} 
%
\noindent The MPL index hinges, first of all, on the idea that in each time/country $t$, the $N$ commodity prices  move proportionally to a  set of $N$ reference  prices, that is
\begin{equation}\label{eq:multi1}
\underset{(N,1)}{\boldsymbol{p}_t}\,\approx\, \lambda_t  \underset{(N,1)}{\boldsymbol{\tilde{p}}}\,\,\,\,\,\,\,\,\,\,\,\,\forall\,t=1,2,\dots,T
\end{equation}
or
\begin{equation}\label{eq:multi2}
\underset{(N,1)}{\boldsymbol{p}_t}\,=\, \lambda_t  \underset{(N,1)}{\boldsymbol{\tilde{p}}}+\underset{(N,1)}{\boldsymbol{\varsigma}_t}\,\,\,\,\,\,\forall\,t=1,2,\dots,T.
\end{equation}
Here $\boldsymbol{p}_t$ is the actual price vector of  the $N$ commodities at period/country $t$, $\tilde{\boldsymbol{p}}$ is the vector of (unknown (time invariant) reference prices, $\lambda_t$  is a scalar factor acting as price index at period/country $t$ and $\boldsymbol{\varsigma}_t$ is a vector of error terms. Eq.~\eqref{eq:multi2} can be viewed as a generalization of Eq.~\eqref{eq:2} for two periods/countries.
As per Eq.~\eqref{eq:multi2}, in each period/country $t$, the $N$ prices ($\boldsymbol{p}_t$)  can be represented by a point in a $N$-dimensional space. Accordingly, $N$ prices in $T$ periods/countries, $\boldsymbol{P}=[\boldsymbol{p}_{1},...,\boldsymbol{p}_{t},..,\boldsymbol{p}_{T}]$, can be represented by $T$ points in a $N$-dimensional space. If all prices move proportionally, these points would lie on a hyperplane, $\tilde{\boldsymbol{p}}$, and, in particular, on a straight line crossing the origin for
$T=2$. 

In general, this is only approximately true and the price ``line'' crossing the origin enjoys the property of fitting the observed price points, according to a  criterion which minimizes the deviations of the data from the ``line''. 
In compact notation, Eq.~\eqref{eq:multi1} can be more conveniently reformulated as follows
\begin{equation}\label{eq:multi3}
\underset{(N,T)}{\boldsymbol{P}_{}}\approx \,\underset{(N,T)}{\boldsymbol{\Pi}_{}}=\underset{(N,1)}{\tilde{\boldsymbol{p}}}\,\underset{(1,T)}{\boldsymbol{\lambda}_{}^{'}}
\end{equation}
where $\boldsymbol{\lambda}$ is the vector of the $T$ price indexes and  $\tilde{\boldsymbol{p}}$ is the vector of the $N$ (unknown) reference prices. According to Eq.~\eqref{eq:multi3}, the problem of determining a set of $T$ price indexes can be read as the problem of approximating the price matrix,  $\boldsymbol{P}$, with a matrix of unit rank, $\boldsymbol{\Pi}$, defined as the product of a vector of price indexes by a (unknown) reference price vector. Moving from prices to values, in light of Eq.~\eqref{eq:multi3}, the matrix $\boldsymbol{V}$ of the values of  $N$ commodities in $T$ periods/countries has the representation
\begin{equation}\label{eq:multi33}
\underset{(N,T)}{\boldsymbol{V}_{}}=\boldsymbol{P}\ast\boldsymbol{Q}\approx\boldsymbol{\Pi}\ast\boldsymbol{Q} \approx (\underset{(N,1)}{\tilde{\boldsymbol{p}}}\underset{(1,T)}{\boldsymbol{\lambda}'_{}})\ast\underset{(N,T)}{\boldsymbol{Q}_{}} 
\end{equation}
where $\boldsymbol{Q}$ is the matrix  of the quantities of $N$ commodities in $T$ periods/countries. 
Eq.~\eqref{eq:multi33} can be rewritten as
\begin{equation}\label{eq:gh}
[\boldsymbol{v}_{1},...,\boldsymbol{v}_{t},...,\boldsymbol{v}_{T}]\approx[\lambda_{1}\,\tilde{\boldsymbol{p}}_1\ast \boldsymbol{q}_{1},..,\lambda_{t}\,\tilde{\boldsymbol{p}}_t\ast \boldsymbol{q}_{t},...,\lambda_{T}\,\tilde{\boldsymbol{p}}_T\ast \boldsymbol{q}_{T}]
\end{equation}
where $\boldsymbol{v}_{t}$ and $\boldsymbol{q}_{t}$ are the $t^{th}$ columns of $\boldsymbol{V}$ and $\boldsymbol{Q}$ respectively, and $\lambda_{t}$ is the $t^{th}$ element of $\lambda_{t}$.
Eq.~\eqref{eq:gh} entails the following 
\begin{equation}
{\boldsymbol{v}_t}=\lambda_t {\tilde{\boldsymbol{p}}}\ast{\boldsymbol{q}_t}+{\bm{\varepsilon}_t}\,\,\,\,\,\forall\, t=1,2,\dots,T
\end{equation}
where ${\boldsymbol{\varepsilon}_t}$ is a non-systematic stochastic term. 
The above formula, taking into account the identity
\[\underset{(N,1)}{\tilde{\boldsymbol{p}}}*\underset{(N,1)}{\boldsymbol{q}_t}=\underset{(N,N)}{\boldsymbol{D}_{\tilde{\boldsymbol{p}}}} \underset{(N,1)}{\boldsymbol{q}_t}, \]
can be re-written as
\begin{equation}\label{eq:mod1}
\delta_t \underset{(N,1)}{\boldsymbol{v}_t} =\underset{(N,N)}{\boldsymbol{D}_{\tilde{\boldsymbol{p}}}}\,\underset{(N,1)}{\boldsymbol{q}_{t_{}}}+\underset{(N,1)}{\boldsymbol{\varepsilon}_t}\,\,\,\,\,\forall\, t=1,2,\dots,T                                            \end{equation}
where $\delta_t=(\lambda_t)^{-1}$ takes the role of the deflator,  and  $\boldsymbol{D}_{\tilde{\boldsymbol{p}}}$ is a diagonal matrix with diagonal entries equal to the elements of  $\tilde{\boldsymbol{p}}$ defined as in Eq.~\eqref{eq:multi2}.\footnote{The matrix $\boldsymbol{D}_{\tilde{\boldsymbol{p}}}$ is defined as follows:
\[\boldsymbol{D}_{\tilde{\boldsymbol{p}}}=\begin{bmatrix} \tilde{p}_{1} & 0 & \dots &0 \\ 0 & \tilde{p}_{2} & \dots &0
\\ \vdots & \vdots & \ddots &\vdots\\ 0 & 0 & \dots &\tilde{p}_{N} \end{bmatrix}.
\]\label{ftn:matrix}} 
Eq.~\eqref{eq:mod1} expresses the value, $v_{it}$, of each commodity $i$ at time $t$ (discounted by a factor  $\delta_t$) as the product between the (time invariant) reference  price, $\tilde{p}_i$, and the corresponding quantity, $q_{it}$, plus an error term, $\varepsilon_{it}$. Note that $\lambda_t=\delta_t^{-1}$ plays the role of the multi-period/multilateral price index in Eq.~\eqref{eq:mod1}.
Over $T$ periods/countries, the model can be written as
\begin{equation}\label{eq:1}
\underset{(N, T)}{\boldsymbol{V}_{}}\, \, \underset{(T, T)}{\boldsymbol{D}_{\boldsymbol{\delta}}}=  \underset{(N, N)}{\boldsymbol{D}_{\tilde{\boldsymbol{p}}}}\, \, \underset{(N, T)}{\boldsymbol{Q}_{}} + \underset{(N, T)}{\boldsymbol{E}_{}}
\end{equation}
where $\boldsymbol{D}_{\boldsymbol{\delta}}$ is a $T\times T$ diagonal matrix with diagonal entries equal to the elements of $\boldsymbol{\delta}$ (see Footnote~\ref{ftn:matrix}). 
\noindent Without lack of generality, we assume that the first period is the base period (that is  $\delta_1$=$\lambda_1=1$), and write the first equation separately from the others $T-1$. Accordingly, the system in Eq.~\eqref{eq:1} can be written as
\begin{equation}\label{eq:1b}
\left[\underset{(N,1)}{\boldsymbol{v}_{1},}\,\, \underset{(N,T-1)}{\boldsymbol{V}_1}\right] \begin{bmatrix} 1 &  \underset{(1,T-1)}{\boldsymbol{0}'}\\ \underset{(T-1,1)}{\boldsymbol{0}} & \underset{(T-1,T-1)}{\tilde{\boldsymbol{D}}_\delta}
\end{bmatrix} 
= \underset{}{\boldsymbol{D}_{\tilde{\boldsymbol{p}}}} \left[\underset{(N,1)}{\boldsymbol{q}_{1},}\,\, \underset{(N,T-1)}{\boldsymbol{Q}_1}\right]+\left[\underset{(N,1)}{\boldsymbol{\varepsilon}_{1},}\,\, \underset{(N,T-1)}{\boldsymbol{E}_1}\right]
\end{equation}
or as
\begin{equation}
\label{eq:17}
\begin{cases}
& \boldsymbol{v}_1=\boldsymbol{D}_{\tilde{\boldsymbol{p}}}\boldsymbol{q}_1+\boldsymbol{\varepsilon}_1
\\& \boldsymbol{V}_1 \tilde{\boldsymbol{D}}_\delta=\boldsymbol{D}_{\tilde{\boldsymbol{p}}}\boldsymbol{Q}_1+\boldsymbol{E}_1
\end{cases}
\end{equation}
which, after some computations\footnote{Use has been made of the relationships
\[vec(\boldsymbol{ABC})=(\boldsymbol{C}'\otimes \boldsymbol{A})\,vec(\boldsymbol{A}) \,\,\,\,\,\,\mbox{and}\,\,\,     \,\,\,                                                        
\underset{(N^2,1)}{vec(\boldsymbol{D}_{\boldsymbol{a}})}=\boldsymbol{R}'_N\,\underset{(N,1)}{\boldsymbol{a}}\]                                              
where $\boldsymbol{D}_{\boldsymbol{a}}$ is a diagonal matrix whose diagonal entries are the elements of the vector $\boldsymbol{a}$ and $\boldsymbol{R}_N$ is the transition matrix from the Kronecker to the Hadamard product \citep{faliva1996hadamard}.
},  can  be also expressed as
\begin{equation}
\label{eq:sist1}
\begin{cases}
&  \boldsymbol{v}_1=(\boldsymbol{q}_1' \otimes \boldsymbol{I}_N) \boldsymbol{R}_N' \tilde{\boldsymbol{p}}+\boldsymbol{\varepsilon}_1
\\
& \underset{(N(T-1),1)}{\boldsymbol{0}}=(\boldsymbol{I}_{T-1}\otimes (\boldsymbol{-V}_1))\boldsymbol{R}_{T-1}'\boldsymbol{\delta}+(\boldsymbol{Q}_1'\otimes \boldsymbol{I}_N)\boldsymbol{R}_N'\tilde{\boldsymbol{p}}+\boldsymbol{\eta}\
\end{cases}.
\end{equation}
Here, $\boldsymbol{\eta}= vec(\boldsymbol{E}_1)$, $\boldsymbol{\delta}$ is a vector whose elements are the diagonal entries of $\tilde{\boldsymbol{D}}_{\delta}$ as specified in Eq.~\eqref{eq:1b} and $\boldsymbol{R}_j$ denotes the transition matrix from the Kronecker to the Hadamard product.\footnote{The matrix $\boldsymbol{R}_{j}'$ is defined as follows \[\underset{(j^2\times j)}{\boldsymbol{R}_j'} =
\begin{bmatrix} \underset{(1,j)}{\boldsymbol{e}_{1}}\otimes\underset{(1,j)}{\boldsymbol{e}_{1}} &
\underset{(j,1)}{\boldsymbol{e}_{2}}\otimes\underset{(j,1)}{\boldsymbol{e}_{2}}
& \dots  & \underset{j,1)}{\boldsymbol{e}_{j}}\otimes\underset{(1,j)}{\boldsymbol{e}_{j}} \end{bmatrix},
\]
where \noindent  $\boldsymbol{e}_i$  represents  the $N$ dimensional $i$-th elementary vector. } 
The vector $\boldsymbol{\delta}$, whose (non-null) entries are the reciprocals of the elements of the price index 
$\boldsymbol{\lambda}$\footnote{The reciprocal of the deflator vector,  $\boldsymbol{\delta}$, is defined in Eq.~\eqref{eq:666}.},  
can be obtained by estimating Eq.~\eqref{eq:sist1} with the ordinary least squares (OLS), under the assumption that 
\begin{equation}
\mathds{E}(\boldsymbol{\mu}\boldsymbol{\mu}')=\sigma^2 \boldsymbol{I}_N \otimes \boldsymbol{I}_T=\sigma^2 \boldsymbol{I}_{NT} 
\,\,\,\mbox{where}\,\,\,\boldsymbol{\mu}=\begin{bmatrix}\varepsilon_1 & \boldsymbol{\eta}\end{bmatrix}'.\end{equation}
In this connection, we state the following result.
\begin{theorem}\label{th:2}
\textit{The MPL index can be obtained as reciprocal of the OLS estimate of the deflator vector,
$\boldsymbol{\delta}$,  in the system in Eq.~\eqref{eq:sist1}. This OLS estimate is given by
\begin{equation}\label{eq:delta}\begin{split}
\underset{(T-1,1)}{\hat{{\boldsymbol{\delta}}}}=&\left\{
(\boldsymbol{I}_{T-1}\ast\boldsymbol{V}_1'\boldsymbol{V}_1)-(\boldsymbol{Q}_1'\ast\boldsymbol{V}_1')\left[
(\boldsymbol{q}_1\boldsymbol{q}_1'+\boldsymbol{Q}_1\boldsymbol{Q}_1')\ast\boldsymbol{I}_N
\right]^{-1}(\boldsymbol{Q}_1\ast\boldsymbol{V}_1)
\right\}^{-1}\\&(\boldsymbol{Q}_1'\ast\boldsymbol{V}_1')\left[
(\boldsymbol{q}_1\boldsymbol{q}_1'+\boldsymbol{Q}_1\boldsymbol{Q}_1')\ast\boldsymbol{I}_N
\right]^{-1}(\boldsymbol{q}_1\ast \boldsymbol{v}_1).\end{split}\end{equation}
}
\end{theorem}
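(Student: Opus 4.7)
My plan is to solve the ordinary least squares problem associated with the stacked system in Eq.~\eqref{eq:sist1} by writing the normal equations in closed form and then concentrating out the nuisance parameter $\tilde{\boldsymbol{p}}$ so that only $\boldsymbol{\delta}$ remains. First I would translate the Kronecker/selector-matrix expressions back to the original form in Eq.~\eqref{eq:17} via the identities $(\boldsymbol{q}_1'\otimes\boldsymbol{I}_N)\boldsymbol{R}_N'\tilde{\boldsymbol{p}}=\tilde{\boldsymbol{p}}\ast\boldsymbol{q}_1$ and $(\boldsymbol{Q}_1'\otimes\boldsymbol{I}_N)\boldsymbol{R}_N'\tilde{\boldsymbol{p}}-(\boldsymbol{I}_{T-1}\otimes\boldsymbol{V}_1)\boldsymbol{R}_{T-1}'\boldsymbol{\delta}=\mathrm{vec}(\boldsymbol{D}_{\tilde{\boldsymbol{p}}}\boldsymbol{Q}_1-\boldsymbol{V}_1\boldsymbol{D}_{\boldsymbol{\delta}})$. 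This way the OLS criterion simplifies to
\[
S(\tilde{\boldsymbol{p}},\boldsymbol{\delta}) = \|\boldsymbol{v}_1-\tilde{\boldsymbol{p}}\ast\boldsymbol{q}_1\|^2 + \|\boldsymbol{V}_1\boldsymbol{D}_{\boldsymbol{\delta}}-\boldsymbol{D}_{\tilde{\boldsymbol{p}}}\boldsymbol{Q}_1\|_F^2,
\]
which is convex in the joint parameter and easy to differentiate coordinatewise.

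Next I would derive the first-order conditions. Differentiating $S$ with respect to $\tilde{\boldsymbol{p}}$, the $i$-th component of $\tilde{\boldsymbol{p}}$ appears only in the $i$-th row of each block, so the FOC decouples across commodities and gives
\[
\bigl[(\boldsymbol{q}_1\boldsymbol{q}_1'+\boldsymbol{Q}_1\boldsymbol{Q}_1')\ast\boldsymbol{I}_N\bigr]\tilde{\boldsymbol{p}} = \boldsymbol{q}_1\ast\boldsymbol{v}_1 + (\boldsymbol{Q}_1\ast\boldsymbol{V}_1)\boldsymbol{\delta},
\]
where the matrix on the left is diagonal with strictly positive entries (assuming each commodity is observed in at least one period/country) and hence invertible. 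Differentiating $S$ with respect to $\boldsymbol{\delta}$, and noting that the $t$-th entry of $(\boldsymbol{Q}_1'\ast\boldsymbol{V}_1')\tilde{\boldsymbol{p}}$ is $\sum_i v_{i,t}q_{i,t}\tilde{p}_i$ and that $(\boldsymbol{I}_{T-1}\ast\boldsymbol{V}_1'\boldsymbol{V}_1)$ is diagonal with entry $\sum_i v_{i,t}^2$, I obtain
\[
(\boldsymbol{I}_{T-1}\ast\boldsymbol{V}_1'\boldsymbol{V}_1)\boldsymbol{\delta} = (\boldsymbol{Q}_1'\ast\boldsymbol{V}_1')\tilde{\boldsymbol{p}}.
\]

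To finish, I would substitute the first FOC into the second, collect the $\boldsymbol{\delta}$-terms on the left-hand side (this produces the Schur complement of the $\tilde{\boldsymbol{p}}$-block in the full normal equations matrix), and invert. Explicitly,
\[
\Bigl\{(\boldsymbol{I}_{T-1}\ast\boldsymbol{V}_1'\boldsymbol{V}_1)-(\boldsymbol{Q}_1'\ast\boldsymbol{V}_1')\bigl[(\boldsymbol{q}_1\boldsymbol{q}_1'+\boldsymbol{Q}_1\boldsymbol{Q}_1')\ast\boldsymbol{I}_N\bigr]^{-1}(\boldsymbol{Q}_1\ast\boldsymbol{V}_1)\Bigr\}\hat{\boldsymbol{\delta}} = (\boldsymbol{Q}_1'\ast\boldsymbol{V}_1')\bigl[(\boldsymbol{q}_1\boldsymbol{q}_1'+\boldsymbol{Q}_1\boldsymbol{Q}_1')\ast\boldsymbol{I}_N\bigr]^{-1}(\boldsymbol{q}_1\ast\boldsymbol{v}_1),
\]
which is exactly the claimed formula once the Schur complement is inverted. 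Since $\lambda_t = \delta_t^{-1}$ by construction, the reciprocal of $\hat{\boldsymbol{\delta}}$ yields the MPL index.

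The main obstacle will be the bookkeeping of Hadamard/Kronecker identities, namely, verifying cleanly that the mixed cross-terms produced by differentiating the second Frobenius-norm block indeed take the Hadamard forms above, that the two ``diagonal'' matrices $(\boldsymbol{q}_1\boldsymbol{q}_1'+\boldsymbol{Q}_1\boldsymbol{Q}_1')\ast\boldsymbol{I}_N$ and $\boldsymbol{I}_{T-1}\ast\boldsymbol{V}_1'\boldsymbol{V}_1$ are genuinely invertible, and that the Schur complement is nonsingular, which amounts to full column rank of the design matrix in Eq.~\eqref{eq:sist1}.
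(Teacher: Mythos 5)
Your proposal is correct and follows essentially the same route as the paper: both set up the OLS normal equations for the stacked system and eliminate the nuisance block $\tilde{\boldsymbol{p}}$ via the Schur complement, which is exactly what the paper's partitioned inversion of $(\boldsymbol{X}'\boldsymbol{X})$ accomplishes. The only difference is presentational — you derive the first-order conditions coordinatewise from the Frobenius-norm criterion rather than through the Kronecker-to-Hadamard transition matrices $\boldsymbol{R}_N$, and the resulting blocks and the final expression for $\hat{\boldsymbol{\delta}}$ coincide with Eq.~\eqref{eq:delta}.
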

\begin{proof}
See Appendix~\ref{app:proofs2}.
\end{proof}
\noindent
The case where $T=2$ is worth considering because it sheds light on the index structure. In this case, the price index turns out to be a ratio of weighted price averages, with weights depending on the harmonic means of the squared quantities as stated in the following corollary.  
\begin{corollary}\label{cor:1}
\textit{When $T=2$, the MPL index, $\widehat{\lambda}$, becomes
\begin{equation}
\label{eq:ffrt0}
\hat{{\lambda}}=\frac{\sum_{i=1}^Np_{i2}\pi_i}{\sum_{i=1}^Np_{i1}\pi_i}
\end{equation}
where $\pi_i= 2 p_{i2}\frac{q_{i1}^2q_{i2}^2}{q_{i1}^2+q_{i2}^2}$ and $p_{it}=\frac{v_{it}}{q_{it}}$. The price index in Eq.~\eqref{eq:ffrt0} can be also   expressed as a convex linear  combination of prices, that is
\begin{equation}
\label{eq:ffrt}
\hat{{\lambda}}=\sum_{i=1}^N\frac{p_{i2}}{p_{i1}}\tilde{\pi}_i
\end{equation}
with weights $\tilde{\pi}_{i}$ given by 
\begin{equation}
\tilde{\pi}_{i}=\frac{v_{i1}v_{i2}\,\frac{q_{i1}q_{i2}}{q_{i1}^2+q_{i2}^2}}{\sum_{i=1}^N v_{i1}v_{i2}\,\frac{q_{i1}q_{i2}}{q_{i1}^2+q_{i2}^2}}.
\end{equation}
Furthermore, by setting $\tilde{\boldsymbol{q}}_t$=$\boldsymbol D^{-1/2}\boldsymbol q_t$ with $\boldsymbol D$ specified as follows
\begin{equation}
\boldsymbol D=\begin{bmatrix} q_{11}^2+q_{12}^2 & 0 & \dots &0 \\ 0 & q_{21}^2+q_{22}^2  & \dots &0
\\ \vdots & \vdots & \ddots &\vdots\\ 0 & 0 & \dots &q_{N1}^2+q_{N2}^2 \end{bmatrix},
\end{equation}
the $MPL$ index can  be also written in compact form as follows}
\begin{equation}
\hat{\lambda}=\frac{\left(\tilde{\boldsymbol q}_1\ast \boldsymbol v_2\right)'\left(\tilde{\boldsymbol q}_1\ast \boldsymbol v_2\right)}{\left(\tilde{\boldsymbol q}_2\ast \boldsymbol v_2\right)'\left(\tilde{\boldsymbol q}_1\ast \boldsymbol v_1\right)}.
\end{equation}
\end{corollary}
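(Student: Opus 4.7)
\noindent The plan is to specialize the general OLS formula for $\widehat{\boldsymbol\delta}$ in Eq.~\eqref{eq:delta} to the binary case $T=2$ and then manipulate the resulting scalar expression. When $T=2$, the blocks $\boldsymbol V_1$ and $\boldsymbol Q_1$ collapse to single columns $\boldsymbol v_2$ and $\boldsymbol q_2$ and $\boldsymbol I_{T-1}=1$, so $\widehat{\delta}$ is a scalar and the MPL index is simply $\widehat\lambda=\widehat\delta^{-1}$. The decisive observation is that the core middle factor $(\boldsymbol q_1\boldsymbol q_1'+\boldsymbol q_2\boldsymbol q_2')\ast\boldsymbol I_N$ is a diagonal matrix whose $i$-th entry is $q_{i1}^2+q_{i2}^2$, hence its inverse is trivially diagonal with entries $(q_{i1}^2+q_{i2}^2)^{-1}$. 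This reduces every Hadamard/matrix product in Eq.~\eqref{eq:delta} to an explicit univariate sum.

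\noindent First I would compute each of the three scalar ingredients directly:
\begin{itemize}
\item $\boldsymbol v_2'\boldsymbol v_2=\sum_i v_{i2}^2$;
\item $(\boldsymbol q_2'\ast\boldsymbol v_2')\bigl[(\boldsymbol q_1\boldsymbol q_1'+\boldsymbol q_2\boldsymbol q_2')\ast\boldsymbol I_N\bigr]^{-1}(\boldsymbol q_2\ast\boldsymbol v_2)=\sum_i\dfrac{q_{i2}^2 v_{i2}^2}{q_{i1}^2+q_{i2}^2}$;
\item $(\boldsymbol q_2'\ast\boldsymbol v_2')\bigl[(\boldsymbol q_1\boldsymbol q_1'+\boldsymbol q_2\boldsymbol q_2')\ast\boldsymbol I_N\bigr]^{-1}(\boldsymbol q_1\ast\boldsymbol v_1)=\sum_i\dfrac{q_{i1}q_{i2}v_{i1}v_{i2}}{q_{i1}^2+q_{i2}^2}$.
\end{itemize}
Substituting into Eq.~\eqref{eq:delta} and inverting gives
$$\widehat\lambda=\frac{\sum_i v_{i2}^2\bigl(1-\tfrac{q_{i2}^2}{q_{i1}^2+q_{i2}^2}\bigr)}{\sum_i \tfrac{q_{i1}q_{i2}v_{i1}v_{i2}}{q_{i1}^2+q_{i2}^2}}=\frac{\sum_i v_{i2}^2\tfrac{q_{i1}^2}{q_{i1}^2+q_{i2}^2}}{\sum_i \tfrac{q_{i1}q_{i2}v_{i1}v_{i2}}{q_{i1}^2+q_{i2}^2}}.$$

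\noindent Next, using $v_{it}=p_{it}q_{it}$, I would rewrite the numerator as $\sum_i 2p_{i2}^2\,\tfrac{q_{i1}^2q_{i2}^2}{q_{i1}^2+q_{i2}^2}=\sum_i p_{i2}\pi_i$ and the denominator as $\sum_i 2p_{i1}p_{i2}\,\tfrac{q_{i1}^2q_{i2}^2}{q_{i1}^2+q_{i2}^2}=\sum_i p_{i1}\pi_i$, where the common factor $2$ is introduced so as to match the stated form of $\pi_i$. This yields Eq.~\eqref{eq:ffrt0}. The convex-combination form Eq.~\eqref{eq:ffrt} then follows by the elementary identity $\tfrac{p_{i2}}{p_{i1}}\,v_{i1}v_{i2}=p_{i2}^2 q_{i1}q_{i2}$, which lets one factor the denominator out of the numerator and rewrite the ratio as a weighted average of price relatives with weights $\tilde\pi_i$ that obviously sum to one.

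\noindent Finally, for the matrix form, I would note that if $\tilde{\boldsymbol q}_t=\boldsymbol D^{-1/2}\boldsymbol q_t$ then $\tilde q_{it}=q_{it}/\sqrt{q_{i1}^2+q_{i2}^2}$, whence $(\tilde{\boldsymbol q}_1\ast\boldsymbol v_2)'(\tilde{\boldsymbol q}_1\ast\boldsymbol v_2)=\sum_i \tfrac{q_{i1}^2 v_{i2}^2}{q_{i1}^2+q_{i2}^2}$ and $(\tilde{\boldsymbol q}_2\ast\boldsymbol v_2)'(\tilde{\boldsymbol q}_1\ast\boldsymbol v_1)=\sum_i \tfrac{q_{i1}q_{i2}v_{i1}v_{i2}}{q_{i1}^2+q_{i2}^2}$, so these quadratic forms reproduce exactly the numerator and denominator derived above. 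There is no real obstacle: the only delicate point is keeping the Hadamard-product bookkeeping straight, in particular recognizing that Hadamard-multiplying a rank-one outer product $\boldsymbol q_t\boldsymbol q_t'$ by $\boldsymbol I_N$ extracts its diagonal. Once this is noted, the entire argument is a direct algebraic specialization.
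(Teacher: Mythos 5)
Your proposal is correct and follows essentially the same route as the paper's own proof: specialize Eq.~\eqref{eq:delta} to $T=2$, exploit the fact that $(\boldsymbol q_1\boldsymbol q_1'+\boldsymbol q_2\boldsymbol q_2')\ast\boldsymbol I_N$ is diagonal with entries $q_{i1}^2+q_{i2}^2$, reduce everything to explicit sums, and then substitute $v_{it}=p_{it}q_{it}$ to obtain the weighted-average, convex-combination, and compact quadratic-form expressions. Your remark about the factor $2$ in $\pi_i$ being a harmless normalization (it cancels in the ratio) is also consistent with the paper, whose appendix in fact drops that factor.
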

\begin{proof}
See Appendix~\ref{app:proofs3}.
\end{proof}
\noindent As a by-product of Theorem~\ref{th:2}, we state the following result.
\begin{corollary}\label{cor:3}
\textit{The variance-covariance matrix of the deflator vector, $\boldsymbol{\widehat{\delta}}$, given in Theorem~\ref{th:2} is 
\begin{equation}
Var(\boldsymbol{\hat{\delta}})=\sigma^2[\boldsymbol{I}_{T-1}*\boldsymbol{V}_1'\boldsymbol{V}_1]^{-1}.
\end{equation}
\noindent The $t^{th}$  diagonal entry of the above matrix provides the variance of the deflator in the $t^{th}$period/country, given by
\begin{equation}
var({\hat{\delta}_{t}})=\sigma^2[\boldsymbol{v}_t'\boldsymbol{v}_t]^{-1}
\end{equation}
where $\boldsymbol{v}_t$ denotes the $t^{th}$ column of the matrix $\boldsymbol{V}_1$.}
\end{corollary}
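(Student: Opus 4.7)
The plan is to derive the covariance matrix of $\hat{\boldsymbol{\delta}}$ by propagating the spherical disturbance through the OLS linear map obtained in Theorem~\ref{th:2}. Under $\mathds{E}(\boldsymbol{\mu}\boldsymbol{\mu}')=\sigma^2\boldsymbol{I}_{NT}$, the estimator from Eq.~\eqref{eq:sist1} is a linear transformation of the composite error $\boldsymbol{\mu}=[\boldsymbol{\varepsilon}_1',\boldsymbol{\eta}']'$, so Gauss--Markov yields a covariance of the form $\sigma^2(\boldsymbol{X}'\boldsymbol{X})^{-1}$, from which we only need the block corresponding to $\boldsymbol{\delta}$.

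First, I would isolate the regressor block that multiplies $\boldsymbol{\delta}$ in the second sub-system of Eq.~\eqref{eq:sist1}, namely
$$\boldsymbol{Z}=(\boldsymbol{I}_{T-1}\otimes(-\boldsymbol{V}_1))\boldsymbol{R}_{T-1}'.$$
Exploiting that, by definition of the transition matrix, the $t$-th column of $\boldsymbol{R}_{T-1}'$ is $\boldsymbol{e}_t\otimes\boldsymbol{e}_t$, the $t$-th column of $\boldsymbol{Z}$ becomes $\boldsymbol{e}_t\otimes(-\boldsymbol{v}_t)$. Consequently $\boldsymbol{Z}$ is block-diagonal with $t$-th block $-\boldsymbol{v}_t$, and
$$\boldsymbol{Z}'\boldsymbol{Z}=\mathrm{diag}(\boldsymbol{v}_1'\boldsymbol{v}_1,\ldots,\boldsymbol{v}_{T-1}'\boldsymbol{v}_{T-1})=\boldsymbol{I}_{T-1}*\boldsymbol{V}_1'\boldsymbol{V}_1,$$
the last equality because Hadamard multiplication by $\boldsymbol{I}_{T-1}$ extracts the diagonal entries of $\boldsymbol{V}_1'\boldsymbol{V}_1$. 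Inverting this diagonal matrix and reading off the $t$-th diagonal entry immediately delivers $\mathrm{var}(\hat{\delta}_t)=\sigma^2(\boldsymbol{v}_t'\boldsymbol{v}_t)^{-1}$.

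Second, I would stitch the two blocks together. The closed-form estimator in Eq.~\eqref{eq:delta} has the shape of a Schur-complement-type expression obtained after partialling out the $\tilde{\boldsymbol{p}}$-block via Frisch--Waugh--Lovell. Applying the same block-matrix algebra that produced Eq.~\eqref{eq:delta} in the proof of Theorem~\ref{th:2}, but tracking the quadratic form $\mathds{E}(\hat{\boldsymbol{\delta}}\hat{\boldsymbol{\delta}}')-\boldsymbol{\delta}\boldsymbol{\delta}'$ through the linear map, the relevant cross terms between the $\tilde{\boldsymbol{p}}$-regressor and $\boldsymbol{Z}$ collapse and leave the bare Gram matrix $\boldsymbol{Z}'\boldsymbol{Z}$ in the denominator, yielding $\mathrm{Var}(\hat{\boldsymbol{\delta}})=\sigma^2[\boldsymbol{I}_{T-1}*\boldsymbol{V}_1'\boldsymbol{V}_1]^{-1}$.

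The main obstacle I anticipate is exactly this last collapse: the partitioned-inversion formula naively produces a Schur-complement inverse, and reducing it to the diagonal form of the corollary demands that the correction term cancel. I would expect to prove the cancellation through the same Kronecker--Hadamard identities invoked right after Eq.~\eqref{eq:17} (in particular $vec(\boldsymbol{D}_{\boldsymbol{a}})=\boldsymbol{R}'_N\boldsymbol{a}$ and $vec(\boldsymbol{ABC})=(\boldsymbol{C}'\otimes\boldsymbol{A})\,vec(\boldsymbol{B})$), and by carefully reusing the intermediate factorisation already built up in Eq.~\eqref{eq:delta}; the bookkeeping here, rather than any conceptually new step, is the delicate part.
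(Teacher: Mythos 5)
Your overall route is the same as the paper's: write $Var(\hat{\boldsymbol{\beta}})=\sigma^2(\boldsymbol{X}'\boldsymbol{X})^{-1}$ under the spherical-error assumption and read off the $\boldsymbol{\delta}$-block, i.e.\ $Var(\hat{\boldsymbol{\delta}})=\sigma^2[\boldsymbol{I}_{T-1},\,\boldsymbol{0}](\boldsymbol{X}'\boldsymbol{X})^{-1}[\boldsymbol{I}_{T-1},\,\boldsymbol{0}]'=\sigma^2\boldsymbol{\Lambda}_{11}$. Your computation of the Gram matrix of the $\boldsymbol{\delta}$-regressors is correct and cleanly done: the $t$-th column of $(\boldsymbol{I}_{T-1}\otimes(-\boldsymbol{V}_1))\boldsymbol{R}_{T-1}'$ is indeed $\boldsymbol{e}_t\otimes(-\boldsymbol{v}_t)$, giving $\boldsymbol{Z}'\boldsymbol{Z}=\boldsymbol{I}_{T-1}*\boldsymbol{V}_1'\boldsymbol{V}_1$, which matches the $(1,1)$ block of Eq.~\eqref{eq:xp}.

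The genuine gap is the step you yourself flag as "delicate" and then do not carry out: the claimed collapse of the Schur-complement correction. Partitioned inversion gives
\begin{equation*}
\boldsymbol{\Lambda}_{11}=\left\{(\boldsymbol{I}_{T-1}*\boldsymbol{V}_1'\boldsymbol{V}_1)-(\boldsymbol{Q}_1'*\boldsymbol{V}_1')\left[(\boldsymbol{q}_1\boldsymbol{q}_1'+\boldsymbol{Q}_1\boldsymbol{Q}_1')*\boldsymbol{I}_N\right]^{-1}(\boldsymbol{Q}_1*\boldsymbol{V}_1)\right\}^{-1},
\end{equation*}
and the subtracted term is exactly the quantity appearing inside the braces of Eq.~\eqref{eq:delta}. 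It is a positive semidefinite matrix that vanishes only if $\boldsymbol{Q}_1*\boldsymbol{V}_1=\boldsymbol{0}$, which cannot happen for positive quantities and values; no Kronecker--Hadamard identity of the kind you invoke ($vec(\boldsymbol{D}_{\boldsymbol{a}})=\boldsymbol{R}_N'\boldsymbol{a}$, etc.) makes it cancel, since those identities were already used to put $\boldsymbol{X}'\boldsymbol{X}$ in the form of Eq.~\eqref{eq:xp}. So $\boldsymbol{\Lambda}_{11}\succeq(\boldsymbol{Z}'\boldsymbol{Z})^{-1}$ with strict inequality generically, and the reduction to $\sigma^2[\boldsymbol{I}_{T-1}*\boldsymbol{V}_1'\boldsymbol{V}_1]^{-1}$ requires an argument you have not supplied (the paper itself disposes of this point with "some computations"; if the stated formula is to be read as exact rather than as a lower bound or an approximation that neglects the estimation of $\tilde{\boldsymbol{p}}$, that missing argument is the entire content of the corollary). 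Your derivation of the diagonal entry $var(\hat{\delta}_t)=\sigma^2(\boldsymbol{v}_t'\boldsymbol{v}_t)^{-1}$ is fine, but it is conditional on the first display, so it inherits the same gap.
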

\begin{proof}
See Appendix~\ref{app:proofs6}.
\end{proof}
\noindent As for  the deflator vector $\boldsymbol{\hat{\delta}}$, its moments and confidence intervals can be easily obtained within the theory of linear regression models, from the result given in Corollary~\ref{cor:3}. As the price index vector $\hat{\boldsymbol{\lambda}}$  turns out to be the reciprocal of the said deflator (see Eq.~\ref{eq:666}), its  statistical behavior can be derived from the former, following the arguments put forward, for example, in \citet{geary1930frequency, curtiss1941distribution} and \citet{marsaglia1965ratios}, merely to quote a few, on ratios (in particular reciprocals) of random variables. 

The following corollary provides an approximation of the variance of $\hat{\lambda}_t$, obtained by using the first Taylor expansion of the variance of a ratio of two random variables. 

\begin{corollary}
\textit{The variance of the MPL index ${\hat{\lambda}_{t}}$ is
\begin{equation}
var({\hat{\lambda}_{t}}) \approx \frac{var({\hat{\delta}_{t}})}{E({\hat{\delta}_{t}}^4)}
=\frac{\widehat{\sigma}^2}{[\boldsymbol{v}_t'\boldsymbol{v}_t]E({\hat{\delta}_{t}}^4)} \,\,\,\,\,\,\forall\,t=1,\dots,T
\end{equation}
\noindent In the above equation $\widehat{\sigma}^2=\frac{\widehat{\boldsymbol\mu}'\widehat{\boldsymbol \mu}}{NT-(N+T-1)}$ where $\widehat{\boldsymbol\mu}$ are the OLS residuals of the equation system \eqref{eq:sist1}.}
\end{corollary}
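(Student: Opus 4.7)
The plan is to apply the delta method (i.e., a first-order Taylor expansion) to the scalar transformation $g(x)=x^{-1}$ evaluated at $\hat{\delta}_t$, since by construction $\hat{\lambda}_t = \hat{\delta}_t^{-1}$. The building blocks needed are already in place: Corollary~\ref{cor:3} supplies $var(\hat{\delta}_t) = \sigma^2(\boldsymbol{v}_t'\boldsymbol{v}_t)^{-1}$, and the OLS framework of Theorem~\ref{th:2} supplies the usual residual-based estimator $\widehat{\sigma}^2 = \widehat{\boldsymbol{\mu}}'\widehat{\boldsymbol{\mu}}/(NT-(N+T-1))$.

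First I would expand $g(\hat{\delta}_t)$ around a reference point $\mu_t := E(\hat{\delta}_t)$ (assumed nonzero, which is implicit in the definition of $\boldsymbol{p}^{-H}$ in Eq.~\eqref{eq:666}) to first order:
\begin{equation*}
\hat{\lambda}_t = g(\hat{\delta}_t) \approx g(\mu_t) + g'(\mu_t)\bigl(\hat{\delta}_t - \mu_t\bigr) = \mu_t^{-1} - \mu_t^{-2}\bigl(\hat{\delta}_t - \mu_t\bigr).
\end{equation*}
Taking variances of both sides, the constant terms vanish and
\begin{equation*}
var(\hat{\lambda}_t) \approx \bigl(g'(\mu_t)\bigr)^2 var(\hat{\delta}_t) = \frac{var(\hat{\delta}_t)}{\mu_t^{4}}.
\end{equation*}
Second, I would substitute the explicit formula for $var(\hat{\delta}_t)$ from Corollary~\ref{cor:3}, obtaining
\begin{equation*}
var(\hat{\lambda}_t) \approx \frac{\sigma^2}{(\boldsymbol{v}_t'\boldsymbol{v}_t)\,\mu_t^{4}},
\end{equation*}
and finally replace the unknown noise variance $\sigma^2$ with its OLS residual-based estimator $\widehat{\sigma}^2$ to land on the displayed expression. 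The quantity $E(\hat{\delta}_t^{4})$ appearing in the claim plays the role of $\mu_t^{4}$ in the delta-method denominator (the paper is using the usual abuse by identifying the fourth power of the mean with a fourth moment in the Taylor-approximation context).

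The substantive obstacle here is essentially notational rather than analytical: the delta method is a one-line calculus exercise once $\hat{\lambda}_t = 1/\hat{\delta}_t$ is recognized. The only nontrivial point is justifying that the expansion is legitimate, which requires $\mu_t \neq 0$ and $\hat{\delta}_t$ being sufficiently concentrated around $\mu_t$ so that the quadratic remainder is negligible; both hold under the regression assumptions invoked for Theorem~\ref{th:2} (full column rank of the design in Eq.~\eqref{eq:sist1} and spherical errors), so in the writeup I would simply note these regularity conditions and then deliver the two-line Taylor computation, closing with the plug-in of $\widehat{\sigma}^2$.
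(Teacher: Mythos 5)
Your delta-method derivation is exactly the argument the paper relies on: its ``proof'' is simply a citation to the standard first-order Taylor expansion for the variance of a ratio (Kendall--Stuart and Elandt-Johnson), which is what you carry out explicitly. Your reading of $E(\hat{\delta}_t^{4})$ as standing in for $\bigl(E(\hat{\delta}_t)\bigr)^{4}$ in the first-order approximation is the correct interpretation of the paper's notation, so the proposal matches the intended proof.
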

\begin{proof}
See \citet[p. 351]{stuard1994kendall} and \citet[p. 69]{elandt1980survival}. 
\end{proof}

\noindent The following two theorems provide updating formulas for the  price index $\boldsymbol{\hat{\lambda}}$. The former proves suitable when the index is used as a multilateral price index, while the latter is appropriate when it is employed as a multi-period index. In the former case,  values and quantities of the commodities included in the reference basket are assumed available for an additional $T+1$ country. In the latter case, it is supposed that values and quantities of the commodities included in the reference basket become available at time $T+1$. 

\begin{theorem}\label{th:4}
\textit{
Should the values and quantities of $N$ commodities of a reference basket become available for a new additional country, say the $T+1$-th, then, the updated multilateral version of the MPL index, $\boldsymbol{\hat{\lambda}}$, turns out to be the vector of the reciprocals, as defined in Eq.~\eqref{eq:666}, of the following deflator vector
\begin{footnotesize}
\begin{equation}\label{eq:th4}
\begin{split}
\underset{(T,1)}{\hat{\boldsymbol{\delta}}}=&\\
=&\left\{
\begin{bmatrix} 
\boldsymbol{I}_{T-1}\ast\boldsymbol{V}'_1\boldsymbol{V}_1 & \boldsymbol{0}  \\ 
\boldsymbol{0} & \boldsymbol{v}'_{T+1}\boldsymbol{v}_{T+1}  \end{bmatrix}-
\begin{bmatrix} 
\boldsymbol{Q}'_1\ast\boldsymbol{V}'_1  \\ 
\boldsymbol{v}'_{T+1}\ast\boldsymbol{q}'_{T+1}  \end{bmatrix}
\begin{bmatrix} 
\left(\boldsymbol{q}_1\boldsymbol{q}'_1+\boldsymbol{Q}_1\boldsymbol{Q}'_1 +\boldsymbol{q}_{T+1}\boldsymbol{q}'_{T+1}\right) \ast\boldsymbol{I}_N\end{bmatrix}^{-1} \right.
\\&
\left.\begin{bmatrix} 
\boldsymbol{Q}_1\ast\boldsymbol{V}_1  & 
\boldsymbol{v}_{T+1}\ast\boldsymbol{q}_{T+1}  \end{bmatrix}
\right\}^{-1}
\begin{bmatrix} 
\boldsymbol{Q}'_1\ast\boldsymbol{V}'_1  \\ 
\boldsymbol{v}'_{T+1}\ast\boldsymbol{q}'_{T+1}  \end{bmatrix}
\begin{bmatrix} 
\left(\boldsymbol{q}_1\boldsymbol{q}'_1+\boldsymbol{Q}_1\boldsymbol{Q}'_1 +\boldsymbol{q}_{T+1}\boldsymbol{q}'_{T+1}\right) \ast\boldsymbol{I}_N\end{bmatrix}^{-1}
\left(\boldsymbol{q}_1\ast\boldsymbol{v}_1\right).
\end{split}
\end{equation}
\end{footnotesize}
\noindent Here the symbols are defined as in Theorem~\ref{th:2} and $\boldsymbol{v}_{T+1}$, $\boldsymbol{q}_{T+1}$ denote the vector of values and quantities of  $N$ commodities of the new $T+1$-th country, respectively.
}
\end{theorem}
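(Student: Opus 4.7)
The key observation is that adding a new country simply appends one column to the data matrices in the system that was already solved in Theorem~\ref{th:2}. The base-country equation in the first line of Eq.~\eqref{eq:17} is unaffected, while the second line gains one more block-column: letting $\tilde{\boldsymbol V}_1=[\boldsymbol V_1,\boldsymbol v_{T+1}]$ and $\tilde{\boldsymbol Q}_1=[\boldsymbol Q_1,\boldsymbol q_{T+1}]$, the extended non-base equation reads $\tilde{\boldsymbol V}_1\,\tilde{\boldsymbol D}_{\delta^\ast}=\boldsymbol D_{\tilde{\boldsymbol p}}\tilde{\boldsymbol Q}_1+\tilde{\boldsymbol E}_1$, where $\tilde{\boldsymbol D}_{\delta^\ast}$ is a $T\times T$ diagonal matrix. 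The augmented system thus has exactly the same algebraic form as Eq.~\eqref{eq:sist1}, so Theorem~\ref{th:2} applies verbatim with $\boldsymbol V_1,\boldsymbol Q_1$ replaced by $\tilde{\boldsymbol V}_1,\tilde{\boldsymbol Q}_1$.

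The plan is therefore to substitute these augmented matrices into Eq.~\eqref{eq:delta} and expand each Hadamard/product block, showing that it reproduces the block structure in Eq.~\eqref{eq:th4}. The computations to carry out are:
\begin{itemize}
\item $\boldsymbol I_T\ast(\tilde{\boldsymbol V}_1'\tilde{\boldsymbol V}_1)$: since the Hadamard product with $\boldsymbol I_T$ extracts the diagonal, the cross terms $\boldsymbol V_1'\boldsymbol v_{T+1}$ and $\boldsymbol v_{T+1}'\boldsymbol V_1$ vanish, leaving the block-diagonal matrix with blocks $\boldsymbol I_{T-1}\ast\boldsymbol V_1'\boldsymbol V_1$ and $\boldsymbol v_{T+1}'\boldsymbol v_{T+1}$.
\item $\tilde{\boldsymbol Q}_1'\ast\tilde{\boldsymbol V}_1'$: stacking row-wise gives $[\boldsymbol Q_1'\ast\boldsymbol V_1';\ \boldsymbol q_{T+1}'\ast\boldsymbol v_{T+1}']$.
\item $(\boldsymbol q_1\boldsymbol q_1'+\tilde{\boldsymbol Q}_1\tilde{\boldsymbol Q}_1')\ast\boldsymbol I_N$: here $\tilde{\boldsymbol Q}_1\tilde{\boldsymbol Q}_1'=\boldsymbol Q_1\boldsymbol Q_1'+\boldsymbol q_{T+1}\boldsymbol q_{T+1}'$, yielding exactly the bracketed $N\times N$ diagonal matrix to be inverted in Eq.~\eqref{eq:th4}.
\item $\tilde{\boldsymbol Q}_1\ast\tilde{\boldsymbol V}_1$: concatenating column-wise gives $[\boldsymbol Q_1\ast\boldsymbol V_1,\ \boldsymbol q_{T+1}\ast\boldsymbol v_{T+1}]$.
\item The right-most factor $(\boldsymbol q_1\ast\boldsymbol v_1)$ is untouched because the base country is unchanged.
\end{itemize}
Assembling these pieces reconstructs Eq.~\eqref{eq:th4} exactly; taking reciprocals entry-wise as in Eq.~\eqref{eq:666} then gives the updated index vector $\hat{\boldsymbol\lambda}$.

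The main obstacle is essentially bookkeeping rather than mathematical: one must check carefully that the Hadamard product with $\boldsymbol I_T$ (not $\boldsymbol I_{T-1}$) does indeed wipe out every cross-term between the old block and the new column, so that no off-diagonal coupling of old and new country deflators is introduced through the first bracketed term. Once this block-diagonality is established and the inner $N\times N$ matrix is seen to receive only the rank-one correction $\boldsymbol q_{T+1}\boldsymbol q_{T+1}'\ast\boldsymbol I_N$, the rest is direct substitution into the formula of Theorem~\ref{th:2}.
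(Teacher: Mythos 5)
Your proposal is correct and follows essentially the same route as the paper: the paper likewise appends the new country's equation to the system, runs OLS on the augmented regression, and extracts the deflator block by partitioned inversion, which is exactly what applying Theorem~\ref{th:2} to $[\boldsymbol V_1,\boldsymbol v_{T+1}]$ and $[\boldsymbol Q_1,\boldsymbol q_{T+1}]$ amounts to. Your shortcut of invoking Eq.~\eqref{eq:delta} verbatim on the augmented matrices and then expanding the blocks is a tidier packaging of the same computation, and your block expansions (in particular the observation that $\boldsymbol I_T\ast\tilde{\boldsymbol V}_1'\tilde{\boldsymbol V}_1$ kills all cross-terms and that $\tilde{\boldsymbol Q}_1\tilde{\boldsymbol Q}_1'$ contributes only the rank-one correction $\boldsymbol q_{T+1}\boldsymbol q'_{T+1}$) are all accurate.
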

\begin{proof}
See Appendix~\ref{app:proofs7}.
\end{proof}

\begin{theorem}\label{th:5}
\textit{
Should the values and quantities of $N$ commodities of a reference basket become available for  time $T+1$,  then, the updated value $\hat{\lambda}_{T+1}$ of the multi-period version of the MPL index at time $T+1$ turns out to be the reciprocal of the  deflator value at time $T+1$
}
\begin{equation}\label{eq:update1}
\begin{split}
\underset{(1,1)}{\hat{\delta}_{T+1}}=&\left\{
\boldsymbol{v}_{T+1}'\boldsymbol{v}_{T+1}-\left(\boldsymbol{q}_{T+1}'\ast\boldsymbol{v}_{T+1}'\right)\left[
(\boldsymbol{Q}\boldsymbol{Q}'+\boldsymbol{q}_{T+1}\boldsymbol{q}_{T+1}')\ast\boldsymbol{I}_{N}
\right]^{-1}\left(\boldsymbol{q}_{T+1}\ast\boldsymbol{v}_{T+1}\right)
\right\}^{-1} \\&\left(\boldsymbol{q}_{T+1}'\ast\boldsymbol{v}_{T+1}'\right)\left[
(\boldsymbol{Q}\boldsymbol{Q}'+\boldsymbol{q}_{T+1}\boldsymbol{q}_{T+1}')\ast\boldsymbol{I}_{N}
\right]^{-1}\left(\boldsymbol{Q}\ast\boldsymbol{V}\right)\tilde{{\boldsymbol{\delta}}}
\end{split}
\end{equation}
\noindent \textit{where $\tilde{{\boldsymbol{\delta}}}'=[1, \hat{{\boldsymbol{\delta}}}]'$ and $\hat{{\boldsymbol{\delta}}}$ is defined as in Eq.~\eqref{eq:delta}.}
\end{theorem}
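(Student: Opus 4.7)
The plan is to recast the update as an augmented OLS problem in which the previously estimated deflator vector $\tilde{\boldsymbol{\delta}}=[1,\hat{\boldsymbol{\delta}}']'$ is held fixed and only the new deflator $\delta_{T+1}$ together with the reference price vector $\tilde{\boldsymbol{p}}$ remain to be estimated. For each old period $t=1,\dots,T$, the relation in Eq.~\eqref{eq:mod1} becomes an ``observational'' equation $\tilde{\delta}_t\boldsymbol{v}_t=\boldsymbol{D}_{\tilde{\boldsymbol{p}}}\boldsymbol{q}_t+\boldsymbol{\varepsilon}_t$ with only $\tilde{\boldsymbol{p}}$ unknown, whereas period $T+1$ contributes the extra equation $\boldsymbol{0}=-\delta_{T+1}\boldsymbol{v}_{T+1}+\boldsymbol{D}_{\tilde{\boldsymbol{p}}}\boldsymbol{q}_{T+1}+\boldsymbol{\eta}_{T+1}$ in which both $\delta_{T+1}$ and $\tilde{\boldsymbol{p}}$ are unknown. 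This partition parallels Eq.~\eqref{eq:sist1}, except that the ``base-period'' block now absorbs all old data weighted by $\tilde{\boldsymbol{\delta}}$, while the ``non-base'' block is supplied solely by period $T+1$.

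I would then stack the $T+1$ equations into a single regression $\boldsymbol{y}=\boldsymbol{X}\boldsymbol{\theta}+\boldsymbol{\mu}$ with $\boldsymbol{\theta}=[\delta_{T+1},\tilde{\boldsymbol{p}}']'$ and form the normal equations $\boldsymbol{X}'\boldsymbol{X}\boldsymbol{\theta}=\boldsymbol{X}'\boldsymbol{y}$. Using the identities $\boldsymbol{D}_{\tilde{\boldsymbol{p}}}\boldsymbol{q}_t=\boldsymbol{D}_{\boldsymbol{q}_t}\tilde{\boldsymbol{p}}$ and $\sum_{t=1}^{T+1}\boldsymbol{D}_{\boldsymbol{q}_t}^2=(\boldsymbol{Q}\boldsymbol{Q}'+\boldsymbol{q}_{T+1}\boldsymbol{q}_{T+1}')\ast\boldsymbol{I}_N$, the coefficient matrix acquires the $2\times 2$ block form with diagonal blocks $\boldsymbol{v}_{T+1}'\boldsymbol{v}_{T+1}$ and $(\boldsymbol{Q}\boldsymbol{Q}'+\boldsymbol{q}_{T+1}\boldsymbol{q}_{T+1}')\ast\boldsymbol{I}_N$, and off-diagonal block $-(\boldsymbol{q}_{T+1}\ast\boldsymbol{v}_{T+1})'$. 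The right-hand side collapses to $[0,((\boldsymbol{Q}\ast\boldsymbol{V})\tilde{\boldsymbol{\delta}})']'$ via the telescoping identity $\sum_{t=1}^{T}\tilde{\delta}_t(\boldsymbol{q}_t\ast\boldsymbol{v}_t)=(\boldsymbol{Q}\ast\boldsymbol{V})\tilde{\boldsymbol{\delta}}$.

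Finally, I would apply the Schur-complement formula on the second diagonal block: solving the second block row yields $\tilde{\boldsymbol{p}}$ as an affine function of $\delta_{T+1}$, and substituting back into the first row gives a scalar equation whose solution is precisely Eq.~\eqref{eq:update1}. The negative sign in the off-diagonal block cancels when pulled through, leaving the positive factor $(\boldsymbol{q}_{T+1}'\ast\boldsymbol{v}_{T+1}')$ in the final expression. The main obstacle is notational rather than conceptual: keeping track of which Hadamard products convert cleanly into diagonal matrices via $\ast\boldsymbol{I}_N$, and verifying that the $\tilde{\boldsymbol{\delta}}$-weighted aggregation of the old-period contributions on the right-hand side is precisely $(\boldsymbol{Q}\ast\boldsymbol{V})\tilde{\boldsymbol{\delta}}$. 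Once these simplifications are secured, the Schur-complement algebra is entirely analogous to the derivation of Eq.~\eqref{eq:delta} in the proof of Theorem~\ref{th:2}.
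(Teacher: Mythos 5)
Your proposal matches the paper's proof in all essentials: you hold the previously estimated deflators fixed via $\tilde{\boldsymbol{\delta}}=[1,\hat{\boldsymbol{\delta}}']'$, stack the $T$ old equations (now depending only on $\tilde{\boldsymbol{p}}$) with the new period-$(T+1)$ equation into an augmented OLS problem in $[\delta_{T+1},\tilde{\boldsymbol{p}}']'$, obtain the same block normal equations with right-hand side $[0,((\boldsymbol{Q}\ast\boldsymbol{V})\tilde{\boldsymbol{\delta}})']'$, and extract $\hat{\delta}_{T+1}$ by partitioned inversion (Schur complement on the $\tilde{\boldsymbol{p}}$ block), with the sign cancellation handled correctly. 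This is the same route the paper takes, so the proposal is correct.
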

\begin{proof}
See Appendix~\ref{app:proofs8}.
\end{proof}
\noindent Figure~\ref{fig:2} highlights the difference between the updating process of the deflator, and thus of the price index, depending on whether it is used in the multilateral or in the multi-period case.    

\begin{figure}[h!]
\begin{center}
\includegraphics[scale=0.3]{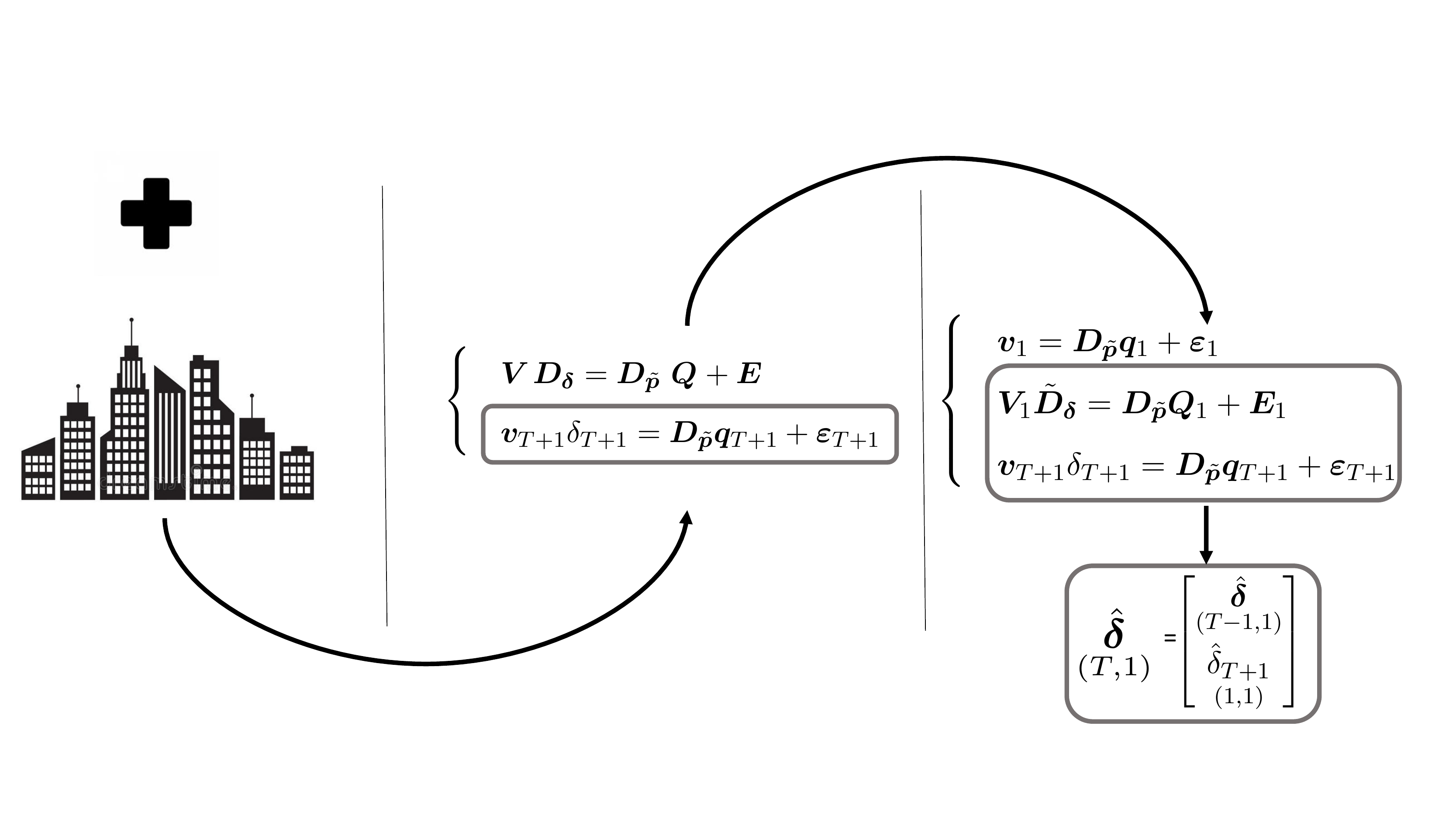}
\includegraphics[scale=0.3]{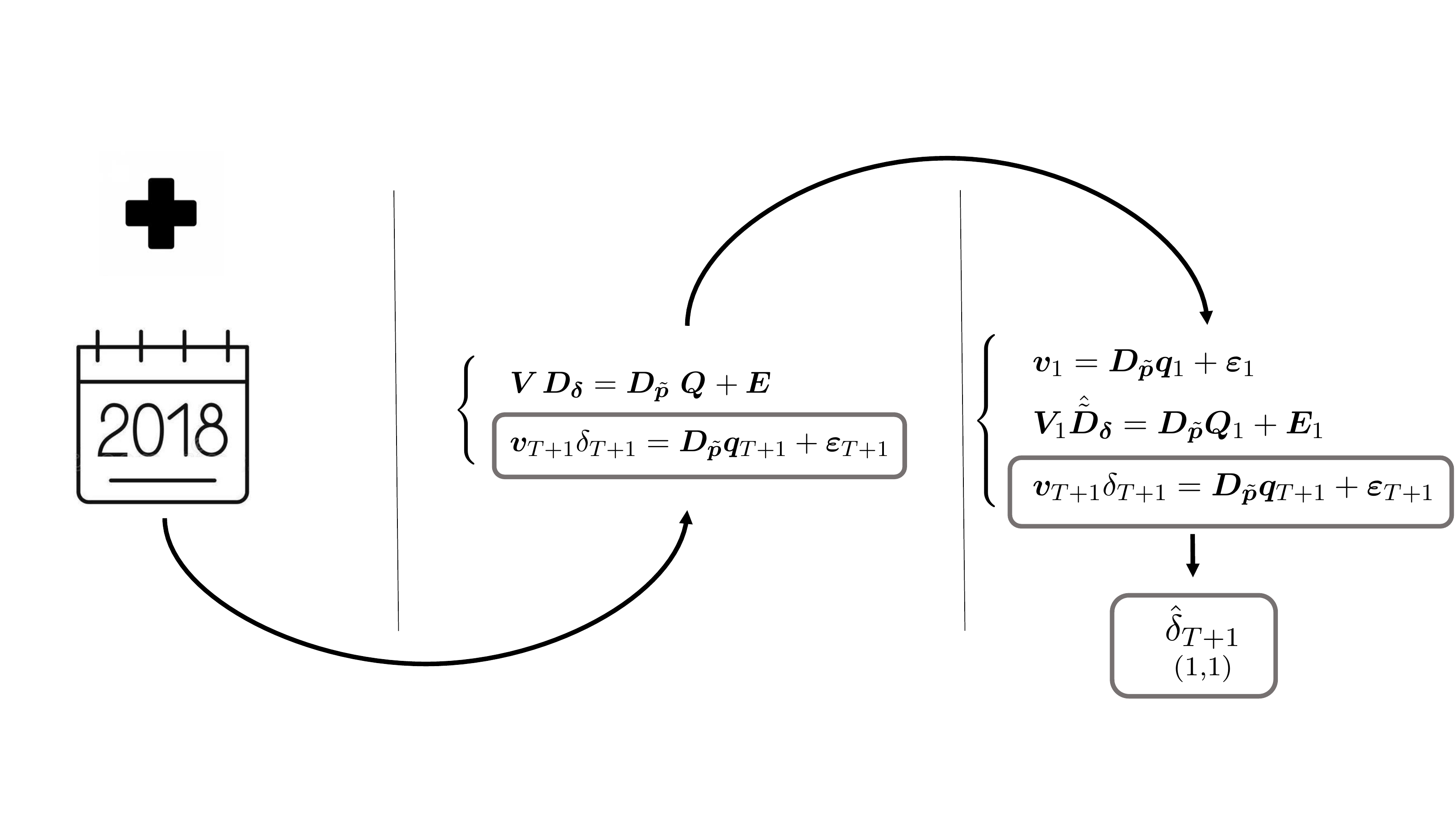} 
\caption{
The top panel shows the ratio of the updating formula for the multilateral version of the MPL index (see Theorem~\ref{th:4}); the bottom panel shows the ratio of the updating formula of the the multi-period version of the MPL index (see Theorem~\ref{th:5}).  
} 
\label{fig:2}
\end{center}
\end{figure}


\section{Properties of the MPL index}\label{subsec:prop}

Let us assume for simplicity $T=2$ and denote with $\hat{\lambda}(\boldsymbol{p}_{1},\boldsymbol{p}_{2},\boldsymbol{q}_{1},\boldsymbol{q}_{2})$ a generic index number where $\boldsymbol{p}_{t}$ and $\boldsymbol{q}_{t}$ are prices and quantities at time $t$. Without lack of generality, $t=1$ is assumed as the base period.
Following \citet{predetti2006numeri} and \citet{martini1992numeri}, the main properties of an index number can be summarized as follows:
\begin{enumerate}
\item \textit{Strong identity}: $\hat{\lambda}(\boldsymbol{p}_{2},\boldsymbol{p}_{2},\boldsymbol{q}_{1},\boldsymbol{q}_{2})=1$.
\item \textit{Commensurability}: $\hat{\lambda}(\boldsymbol{\gamma}\ast\boldsymbol{p}_{1},\boldsymbol{\gamma}\ast\boldsymbol{p}_{2},\boldsymbol{\gamma}^{-1}\ast\boldsymbol{q}_{1},\boldsymbol{\gamma}^{-1}\ast\boldsymbol{q}_{2})=\hat{\lambda}(\boldsymbol{p}_{1},\boldsymbol{p}_{2},\boldsymbol{q}_{1},\boldsymbol{q}_{2})$ with $\underset{(N,1)}{\boldsymbol{\gamma}}>\textbf{0}$ where $\boldsymbol{\gamma}^{-1}\ast \boldsymbol{q}_{t}=[q_{1t}/\gamma_1\,\,\dots\,\,q_{Nt}/\gamma_N]'$.
\item \textit{Proportionality}:  $\hat{\lambda}(\boldsymbol{p}_{1},\alpha\boldsymbol{p}_{2},\boldsymbol{q}_{1},\boldsymbol{q}_{2})=\alpha\,\hat{\lambda}(\boldsymbol{p}_{1},\boldsymbol{p}_{2},\boldsymbol{q}_{1},\boldsymbol{q}_{2})$ with $\alpha>0$.
\item \textit{Dimensionality}:   $\hat{\lambda}(\alpha\boldsymbol{p}_{1},\alpha\boldsymbol{p}_{2},\boldsymbol{q}_{1},\boldsymbol{q}_{2})=\hat{\lambda}(\boldsymbol{p}_{1},\boldsymbol{p}_{2},\boldsymbol{q}_{1},\boldsymbol{q}_{2})$ with $\alpha>0$.
\item \textit{Monotonicity}: $\hat{\lambda}(\boldsymbol{p}_{1},\boldsymbol{k}\ast\boldsymbol{p}_{2},\boldsymbol{q}_{1},\boldsymbol{q}_{2})>\hat{\lambda}(\boldsymbol{p}_{1},\boldsymbol{p}_{2},\boldsymbol{q}_{1},\boldsymbol{q}_{2})$ and $\hat{\lambda}(\boldsymbol{k}\ast\boldsymbol{p}_{1},\boldsymbol{p}_{2},\boldsymbol{q}_{1},\boldsymbol{q}_{2})<\hat{\lambda}(\boldsymbol{p}_{1},\boldsymbol{p}_{2},\boldsymbol{q}_{1},\boldsymbol{q}_{2})$ with $\underset{(N,1)}{\boldsymbol{k}}>\boldsymbol{u}$ where $\boldsymbol{u}$ is the unit vector.
\end{enumerate}
\noindent We now prove the above properties for the MPL index in Eq.~\eqref{eq:ffrt}, which will be denoted with the extended notation 
$\hat{{\lambda}}(\boldsymbol{p}_{1},\boldsymbol{p}_{2},\boldsymbol{q}_{1},\boldsymbol{q}_{2})$.
\begin{enumerate}
\item \textit{Strong identity}: 
taking the price vectors as equal in the two periods, it can be easily proved that
\begin{equation}
\hat{{\lambda}}(\boldsymbol{p}_{2},\boldsymbol{p}_{2},\boldsymbol{q}_{1},\boldsymbol{q}_{2})=\frac{\sum_{i=1}^Np_{i2}\pi_i}{\sum_{i=1}^Np_{i2}\pi_i}=1\,\,\,\mbox{where}\,\,\,\pi_{i}=2p_{i2}\frac{q_{i1}^2q_{i2}^2}{q_{i1}^2+q_{i2}^2}.
\end{equation}
\item \textit{Commensurability}: this property is trivially verified upon noting that, when each element $p_{it}$ of the price vectors are multiplied by a positive constant $\gamma_{i}$ and the corresponding quantity $q_{it}$ is divided by $\gamma_i$, then the weights $\tilde{\pi}_i$ become
$\tilde{\pi}_i=\frac{{\pi}_i}{\gamma_i}$.
Accordingly,
\begin{equation}
\hat{{\lambda}}(\boldsymbol{\gamma}\ast\boldsymbol{p}_{1},\boldsymbol{\gamma}\ast \boldsymbol{p}_{2},\boldsymbol{\gamma}^{-1}\ast\boldsymbol{q}_{1},\boldsymbol{\gamma}^{-1}\ast\boldsymbol{q}_{2})=\frac{\sum_{i=1}^N \gamma_i p_{i2}(\gamma_i ^{-1}\pi_i)}{\sum_{i=1}^N \gamma_i p_{i1} (\gamma_i^{-1}\pi_i)}=\frac{\sum_{i=1}^Np_{i2}\pi_i}{\sum_{i=1}^Np_{i1}\pi_i}.
\end{equation}
\item \textit{Proportionality}: this property is satisfied upon noting that $\breve{\pi}_i=\alpha{\pi}_i$.
Accordingly, \begin{equation}
\hat{{\lambda}}(\boldsymbol{p}_{1},\alpha\boldsymbol{p}_{2},\boldsymbol{q}_{1},\boldsymbol{q}_{2})=\frac{\sum_{i=1}^N \alpha p_{i2}\breve{\pi}_i}{\sum_{i=1}^N p_{i1} \breve{\pi}_i}=\alpha\frac{\sum_{i=1}^N p_{i2}\pi_i}{\sum_{i=1}^N p_{i1} \pi_i}.
\end{equation}  
\item \textit{Dimensionality}: the proof of this property follows the same argument used to prove homogeneity. Upon noting that  $\breve{\pi_i}=\alpha{\pi}_i$,  it follows that
\begin{equation}
\hat{{\lambda}}(\alpha\boldsymbol{p}_{1},\alpha\boldsymbol{p}_{2},\boldsymbol{q}_{1},\boldsymbol{q}_{2})=\frac{\sum_{i=1}^N\alpha p_{i2}\breve{\pi}_i}{\sum_{i=1}^N\alpha p_{i1}\breve{\pi}_i}=\frac{\sum_{i=1}^N p_{i2}\pi_i}{\sum_{i=1}^Np_{i1}\pi_i}.
\end{equation}
\item \textit{Monotonicity}: if we consider $\boldsymbol{k}>\boldsymbol{u}$, then the associated price index  is \[\hat{{\lambda}}(\boldsymbol{p}_{1},\boldsymbol{k}\ast\boldsymbol{p}_{2},\boldsymbol{q}_{1},\boldsymbol{q}_{2})=\frac{\sum_{i=1}^N k_i p_{i2}\pi_i}{\sum_{i=1}^N  p_{i1}\pi_i}>\frac{\sum_{i=1}^N p_{i2}\pi_i}{\sum_{i=1}^N  p_{i1}\pi_i}=\hat{{\lambda}}(\boldsymbol{p}_{1},\boldsymbol{p}_{2},\boldsymbol{q}_{1},\boldsymbol{q}_{2})\] if $k_i^2>k_j$ and $k_j^2>k_i$ for every $i\neq j$ given $i,j=1,\dots,N$.
While if we consider $\boldsymbol{k}\ast\boldsymbol{p}_{1}$, taking into account that the weight vector  $\boldsymbol{\pi}$ is independent from $\boldsymbol{k}$, the following holds \[\hat{{\lambda}}(\boldsymbol{k}\ast\boldsymbol{p}_{1},\boldsymbol{p}_{2},\boldsymbol{q}_{1},\boldsymbol{q}_{2})=\frac{\sum_{i=1}^N  p_{i2}\pi_i}{\sum_{i=1}^N  k_i p_{i1}\pi_i}<\frac{\sum_{i=1}^N p_{i2}\pi_i}{\sum_{i=1}^N  p_{i1}\pi_i}=\hat{{\lambda}}(\boldsymbol{p}_{1},\boldsymbol{p}_{2},\boldsymbol{q}_{1},\boldsymbol{q}_{2}).\] 
\end{enumerate}
It is worth noticing that the MPL index enjoys also the following properties:
\begin{itemize}
\item \textit{Positivity}: this property follows straightforward given that the index is the sum of ratios of non-negative quantities ($\pi_i, p_{i2}, p_{i1}\,\,\forall\,i=1,2,\dots, N$).
\item \textit{Inverse proportionality in the base period}: let consider a vector of prices in the base period  $\alpha \,\boldsymbol{p}_{1}$ with $\alpha>0$. Under this case, $\pi_i$ turns out to be independent from $\alpha$ and the associated index price proves to be proportional to $\hat{\lambda}(\boldsymbol{p}_{1},\boldsymbol{p}_{2},\boldsymbol{q}_{1},\boldsymbol{q}_{2})$
\[\hat{\lambda}(\alpha\boldsymbol{p}_{1},\boldsymbol{p}_{2},\boldsymbol{q}_{1},\boldsymbol{q}_{2})=\frac{\sum_{i=1}^N p_{i2} \pi_i}{\sum_{i=1}^N\alpha p_{i1} \pi_i}=\frac{1}{\alpha} \frac{\sum_{i=1}^N p_{i2} \pi_i}{\sum_{i=1}^Np_{i1} \pi_i}=\frac{1}{\alpha} \hat{\lambda}(\boldsymbol{p}_{1},\boldsymbol{p}_{2},\boldsymbol{q}_{1},\boldsymbol{q}_{2}).\]
\item \textit{Commodity reversal property}: it follows straightforward that the index price is invariant with respect to any permutation $(i)$:
\[\hat{\lambda}(\boldsymbol{p}_{1},\boldsymbol{p}_{2},\boldsymbol{q}_{1},\boldsymbol{q}_{2})= \frac{\sum_{(i)=1}^{(N)} p_{(i)2} \pi_{(i)}}{\sum_{(i)=1}^{(N)}p_{(i)1} \pi_{(i)}}= \frac{\sum_{i=1}^N p_{i2} \pi_{i}}{\sum_{i=1}^Np_{i1} \pi_{i}}.\]
\item \textit{Quantity reversal test}:  a change in the quantity order  only affects  $\pi_i$ that remains invariant $\forall\, i=1,\dots,N$. Therefore the index price $\hat{\lambda}$ does not change. 
\item \textit{Base reversibility (symmetric treatment of time)}: for one commodity, that is $N=1$, it is easy to prove that $\hat{\lambda}(p_{1},p_{2}, q_1,q_2)=\frac{p_2}{p_1}$ and, thus,  $\hat{\lambda}(p_{2},p_{1}, q_2,q_1)=\frac{p_1}{p_2}=\hat{\lambda}(p_{1},p_{2}, q_1,q_2)^{-1}$.
\item \textit{Transitivity}: for $N=1$, $\hat{\lambda}(p_{1},p_{2}, q_1,q_2)=\frac{p_2}{p_1}$, $\hat{\lambda}(p_{2},p_{3}, q_2,q_3)=\frac{p_3}{p_2}$, $\hat{\lambda}(p_{1},p_{3}, q_1,q_3)=\frac{p_3}{p_1}$ therefore, $\hat{\lambda}(p_{2},p_{3}, q_2,q_3)=\hat{\lambda}(p_{1},p_{2}, q_1,q_2) \,\cdot\,\hat{\lambda} (p_{2},p_{3}, q_2,q_3)$.
\item \textit{Monotonicity}: if $\boldsymbol{p}_{2}=\beta \boldsymbol{p}_{1}$ then
\begin{equation}
\tilde{\pi}_i=\beta\left(2p_{i1}\frac{q_{i1}^2q_{i2}^2}{q_{i1}^2+q_{i2}^2}\right)=\beta \gamma_i,
\end{equation}
 and the associated index price turns out to be equal to $\beta$
\[\hat{\lambda}(\boldsymbol{p}_{1},\beta\boldsymbol{p}_{1},\boldsymbol{q}_{1},\boldsymbol{q}_{2})=\frac{\sum_{i=1}^N\beta^2\, p_{i1} \gamma_i}{\sum_{i=1}^N \beta p_{i1} \gamma_i}= \beta \frac{\sum_{i=1}^N p_{i1} \gamma_i}{\sum_{i=1}^N p_{i1} \gamma_i}= \beta.\]
\end{itemize}

\section{An application of the MPL index to the Italian cultural supply data }\label{sec:empirical}
In this section, we provide an application of the MPL  index to Italian cultural supply data, such as revenues and the number of visitors to museums (i.e. monuments, archeological sites, museum circuits, $\dots$).
The availability of  temporal and  geographical data on Italian culture provides a stimulating basis for ascertaining the potential of the MPL price-index methodology set forth in this paper. The flexibility of the MPL index paves the way to moving beyond ISTAT (and similar) analyses, which are confined to price indexes on the supply of data on Italian culture like access to museums and entertainment sectors, aggregated at the national level \citep{istatr2018b}. 
In addition, to evaluate the performance of the MPL index, we have made a comparison with the CPD/TPD price indexes \citep{diewert2005weighted, rao2004country}, using both real and simulated data. Reference has been made to this approach because, under the log-normality assumption of the error term, the maximum likelihood estimator of the said price index tallies with the least square one, likewise with the MPL index.

As for the nature of the data, note that Italian cultural heritage is at the top of various world-class lists and plays a key role in the Italian economy \citep[see, e.g.,][]{alderighi2018flight}.\footnote{The Italian heritage supply chain accounts for 4,976 museums and the like; it generated almost 200 million euro of revenues in 2017 and employs more than 45,000 people \citep{istatr2016}. }
Lately, local cultural supply has evolved significantly. Indeed, most of the Italian museum circuits were founded relatively recently.\footnote{Approximately 2,300 sites (45.5\%)  of the Italian cultural supply chain were opened between 1960 and 1999, while 2,200 sites (38.6\%) were opened in 2000, taking advantage of the investments for economic recovery and infrastructure enhancement made for Italian cultural heritage sites \citep{istatr2016}.} 
In the following analysis, we have considered the ranking of the top 30 Italian cultural institutions (museums and the like) according to the highest number of annual visitors since 2004 (data source: \href{www.statistica.beniculturali.it}{www.statistica.beniculturali.it}). Among these, only 20 of the internationally renowned institutions remained ranked in the top 30 on a yearly basis. The other 10 positions were held by museums and institutions which only temporarily experienced an outstanding flow of visitors. These observations led to a twofold issue. First, the need to set-up a price index finalized at registering changes in the period under consideration. Second, a dynamic updating method of the index in order to preserve the information associated with the 10 positions not consistently present in the top 30 ranking (aspects which are not considered in the approaches usually adopted by most statistical agencies). 
The multi-period version of the MPL price index proves suitable for this scope. Hence, it has been applied to the data set which collects the number of visitors and revenues of the 30 leading Italian cultural institutions which, from 2004 to 2017, were ranked at least two times in the top 30.\footnote{All analyses in this investigation have been made with our own codes, written in R.}
Figure~\ref{fig:5} shows the MPL price index together with its annual percentage variations for the period 2004--2017: 2004 being the base year and 2017 the year used for updating the index. 
We can note that in the early years of our new Millennium, when important investments started being made in the Italian cultural sector, the prices of museums (and the like) tickets grew (Figure~\ref{fig:5}). Thereafter, the price dynamic became more moderate and then tapered in 2009 and 2014 when, the so called ``W'' recession, namely the international financial and debt crisis in European peripheral countries, hit Italy. 

\begin{figure}[h!]
\begin{center}
\includegraphics[scale=0.42]{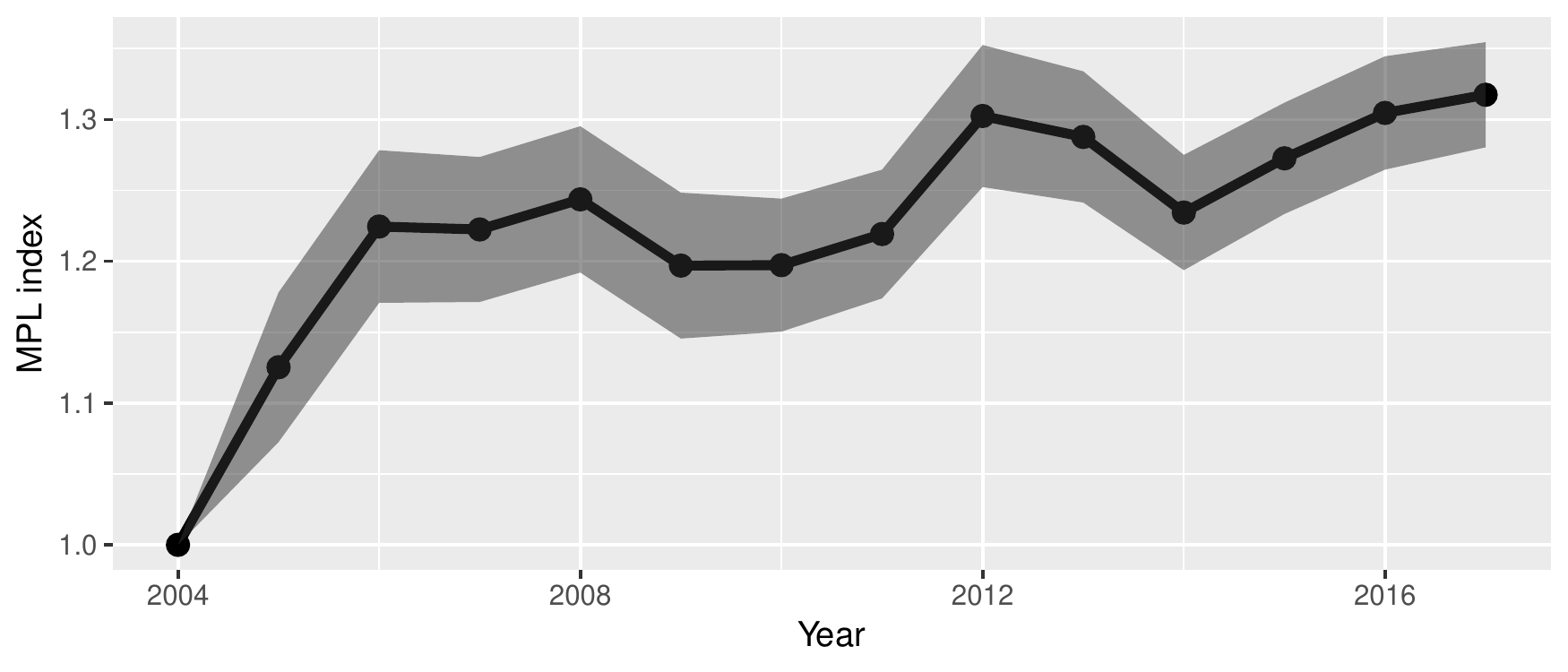}
\includegraphics[scale=0.42]{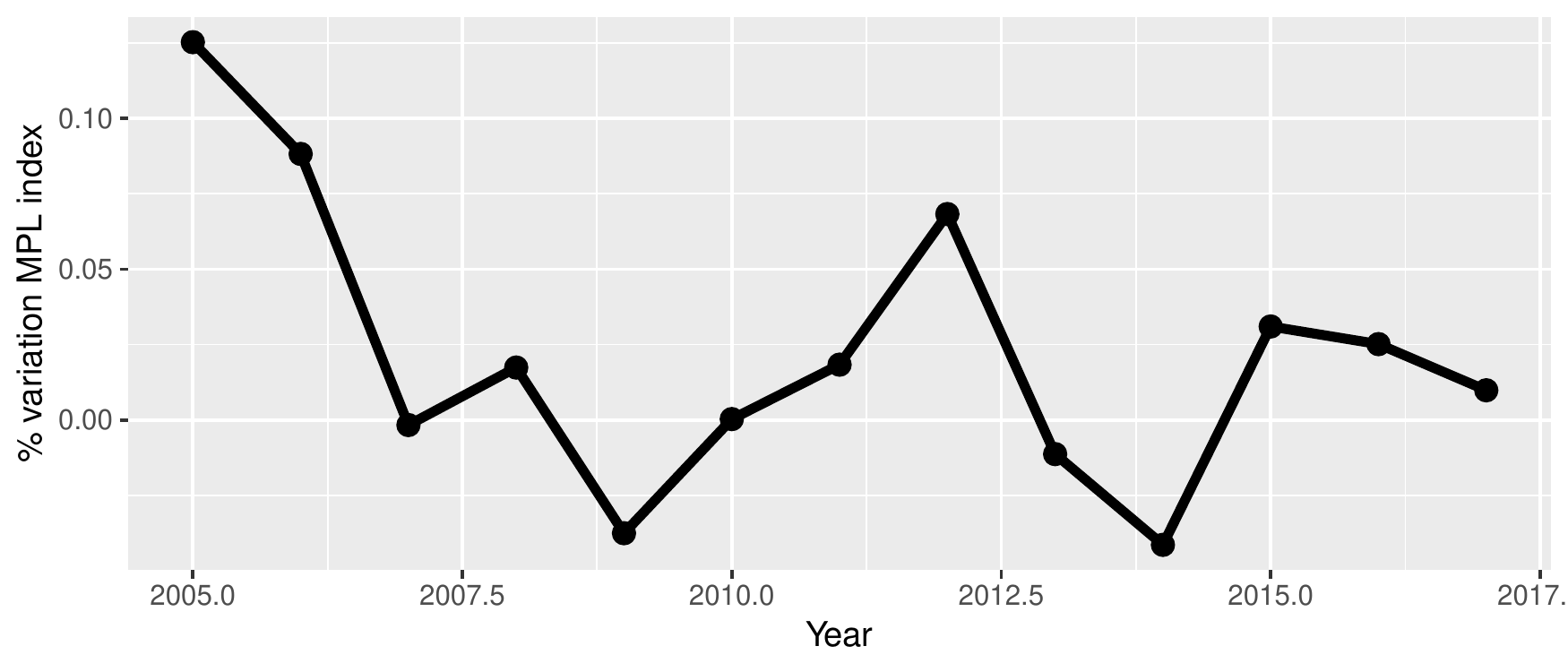}
\caption{MPL index and percentage annual change.}
\label{fig:5}
\end{center}
\end{figure}
\noindent For the sake of further evidence from an empirical standpoint, a comparison of the MPL and the TPD index has been conducted. First TPD price indexes have been computed only for those museums whose prices are available at all times. This has led to a drop in the number of museums/monuments/archeological sites from 36 to 17.  
Figure~\ref{fig:5g} (first panel) shows both the MPL and TPD price indexes together with their 3$\sigma$ confidence bounds. The result is that MPL 
indexes always fall within the confidence bounds of TPD indexes. This result highlights their alignment with the latter, and provides evidence of their greater accuracy, due to their lower standard errors.   
In the case when not all items (museums) are priced in all periods, TPD estimates have been obtained by using the time version of the weighted CPD \citep[][pp. 420-421]{rao2004country}. Figure~\ref{fig:5g} (second panel) shows both MPL and TPD indexes together with their $3\sigma$ confidence bounds. The same comments of the complete price tableau case apply here.   
\begin{figure}[h!]
\begin{center}
\includegraphics[scale=0.42]{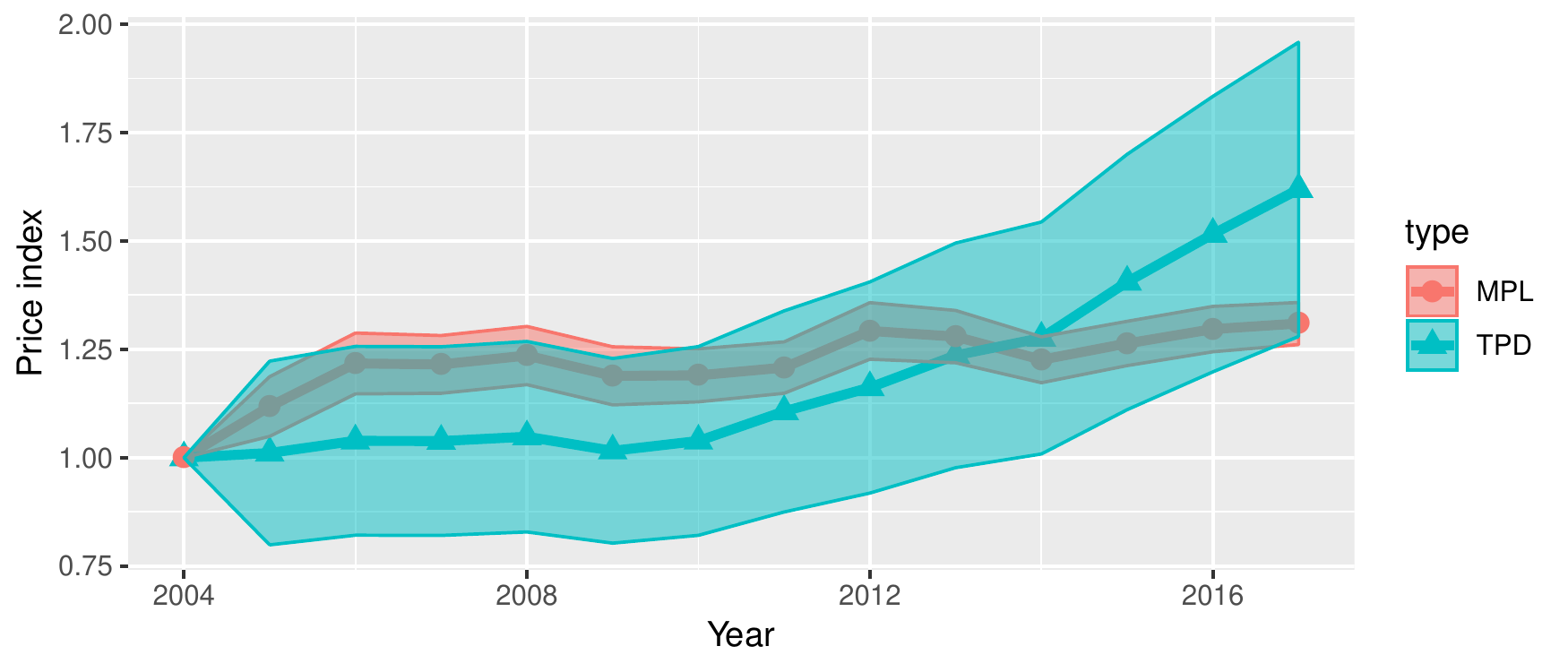} \includegraphics[scale=0.42]{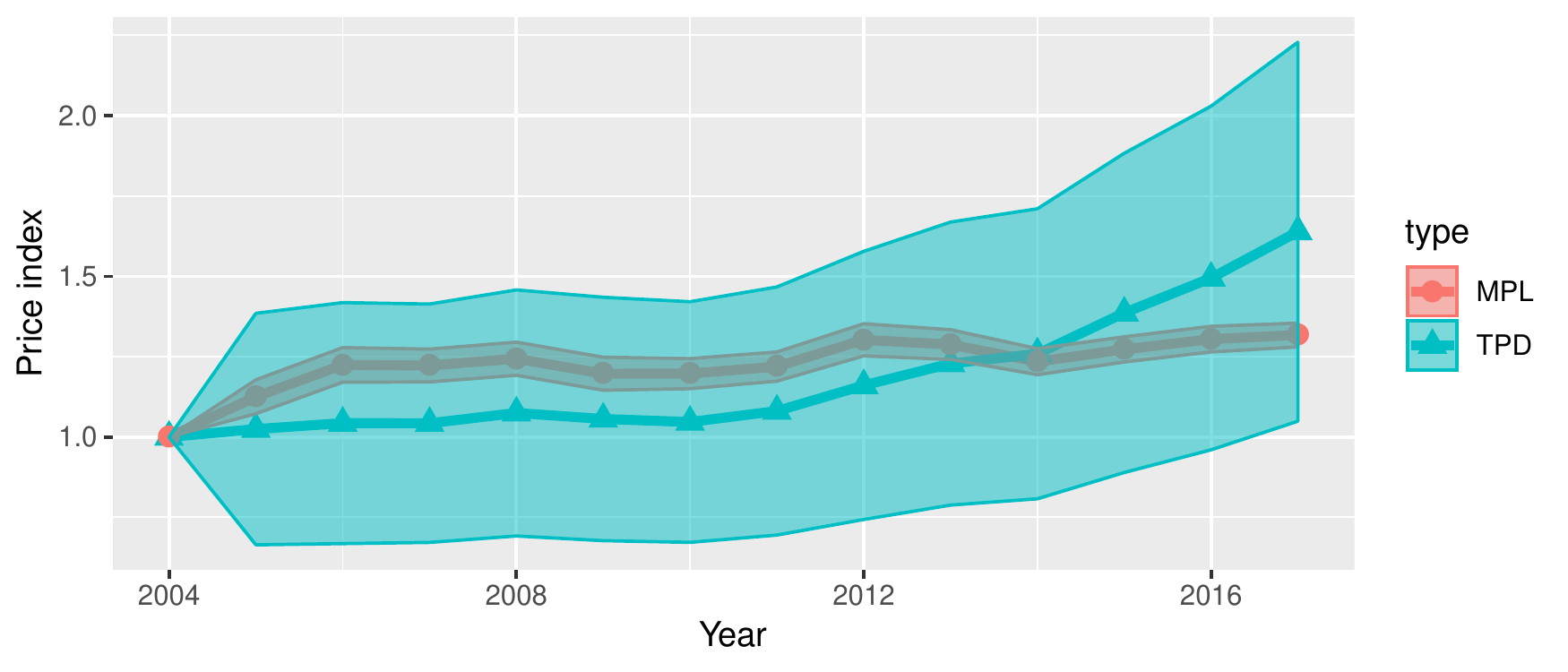}
\caption{In the first panel the MPL index is compared to the TPD index with data from 17 museums always ranked in the top 30; in the second panel the MPL index is compared to the weighted TPD index with data from 36 museums ranked in the top 30 at least twice. 
}
\label{fig:5g}
\end{center}
\end{figure}

\noindent The availability of data on visitors and revenues in 2017 for museums, monuments, archaeological sites, and museum circuits in the North-West, North-East, Centre and South (which includes the two islands Sicily and Sardinia) has allowed the computation of the multilateral version of the MPL index. Looking at the data, we see that almost half (46.3\%) are located in the North, while 28.5\% in the Centre, and 25.2\% in the South and Islands. The Regions with the highest number of cultural institutions are Tuscany (11\%), followed by Emilia-Romagna (9.6\%), Piedmont (8.6\%) and Lombardy (8.2\%)  \citep{istatr2016}. However, alongside the more famous attractions, Italy is home to a wide and rich array of notable locations of cultural interest. A considerable percentage of these places (17.5\%) are found in municipalities with less than 2,000 inhabitants, but which can have up to four or five cultural sites in their small area. Almost a third (30.7\%) are distributed in 1,027 municipalities with a population varying from 2,000 to 10,000, and a bit more than half (51.8\%) are situated in 712 municipalities with a population of 10,000 to 50,000.
Italy is, therefore, characterized by a strongly polycentric cultural supply distributed throughout its territory, even in areas considered as marginal from a geographic stance.
Table~\ref{tab:4b} reports, in the first row, the MPL index computed applying Eq.~\eqref{eq:ffrt0} to the first three areas (North-West, North-East and Centre) considering the Centre as base area. The second row shows the updated values of the MPL index when the South-Islands are added to the data-set. As for the multi-period case, a comparison of the MPL estimates with the CPD ones is provided. The third row of Table~\ref{tab:4b} shows CPD estimates in the case of full price tableau, as all commodities are priced in the four geographic areas. Figure~\ref{fig:5c} shows both the MPL and CPD indexes together with their 3$\sigma$ confidence bounds. Once again, the estimates of the MPL index turn out to be more accurate than those provided by the CPD approach, as the former have standard errors lower than the latter. As in the comparison with the TPD index, the confidence bounds of CPD indexes always include MPL estimates, thus suggesting the compatibility of both indexes.
It is worth noting that in 2017, access to cultural sites in Southern Italy cost the most: almost twice as much as in the North-Eastern area. While the disparity could be ascribed to several factors, such as different costs of managing museums and similar institutions, tourism flows, etc: that type of analysis goes beyond the scope of the current investigation. 
\begin{table}[h!]
\center
\caption{Updated MPL index compared to the CPD index (standard error in parentheses).}
\label{tab:4b}
\begin{tabular}{lllll}
\hline\hline
                                     		& North West & North East  & Centre  & South    \\
\hline
MPL                          	&1.072  (0.185) &0.621  (0.190)   &1.000    &   \\
Updated MPL		&1.070 (0.164) &0.622  (0.170)    &1.000     &1.142 (0.086) \\
CPD 					&1.524 (0.337) &1.283 (0.284)	   &1.000	&  1.021 (0.226)\\
\hline\hline
\end{tabular}
\end{table}

\begin{figure}[h!]
\begin{center}
\includegraphics[scale=0.42]{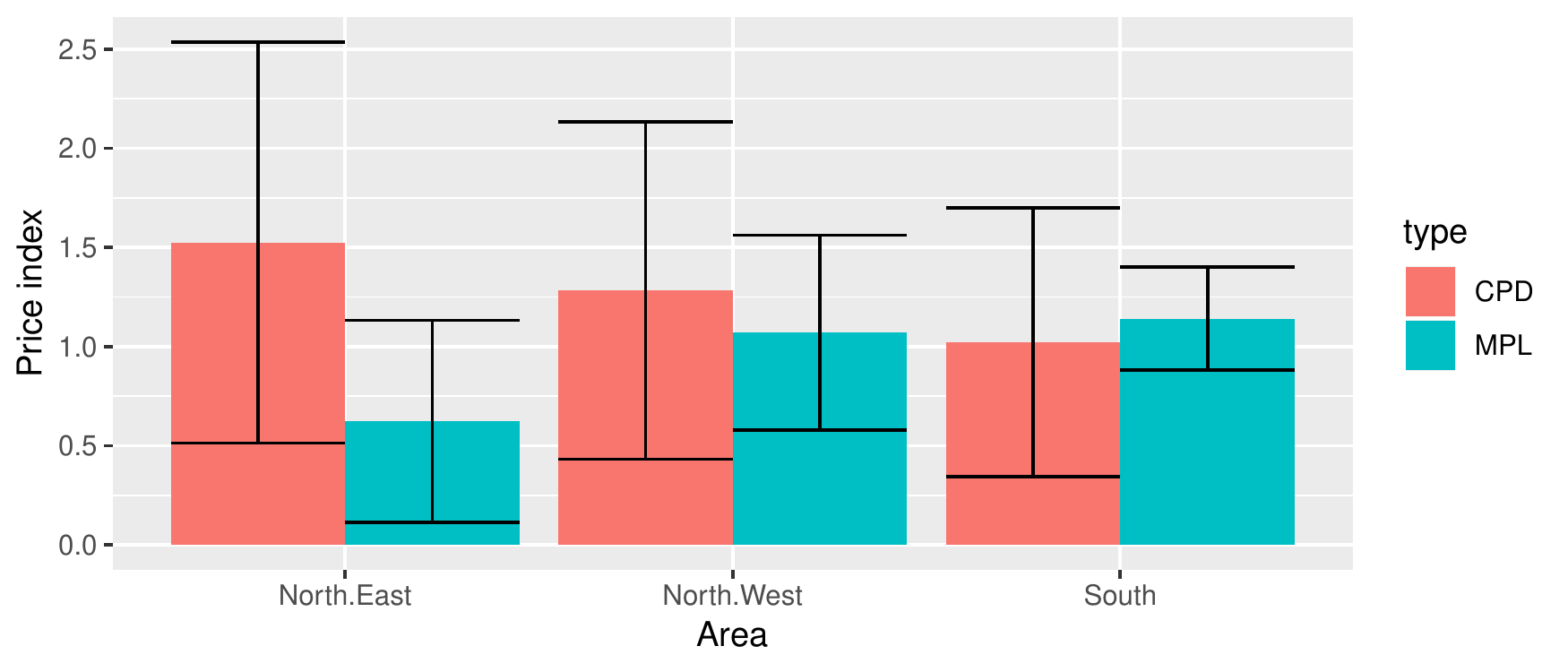}
\caption{MPL and CPD indexes with their 3$\sigma$ confidence bounds.}
\label{fig:5c}
\end{center}
\end{figure}

\noindent Finally, in order to investigate more thoroughly the performance of the MPL index as compared to the TPD one, a simulation analysis has been performed. One thousand simulations were carried out by using perturbed values (and prices as a by-product) and assuming fixed quantities (i.e. equal to the original ones). Next, the simulated values (and prices) were used to compute MPL and TPD indexes in different settings: with and without missing values and/or quantities (and accordingly prices). The final MPL and TPD indexes were obtained as averages of all indexes computed on simulated values and prices. 
Two types of simulations were carried out. First simulated values from the $2^{nd}$ to the $T^{th}$ period (base period values, $\boldsymbol{v}_1$, being kept fixed) were obtained from the original ones ($\boldsymbol{V}_1$) by adding random terms drawn from Normal laws with different means and variances. Plots in Figure~\ref{fig:sim1} show both the MPL and the TPD indexes, for complete and incomplete price tableau cases, together with the associated $3\sigma$ confidence bounds. 

\begin{figure}[h!]
\begin{center}
\includegraphics[scale=0.42]{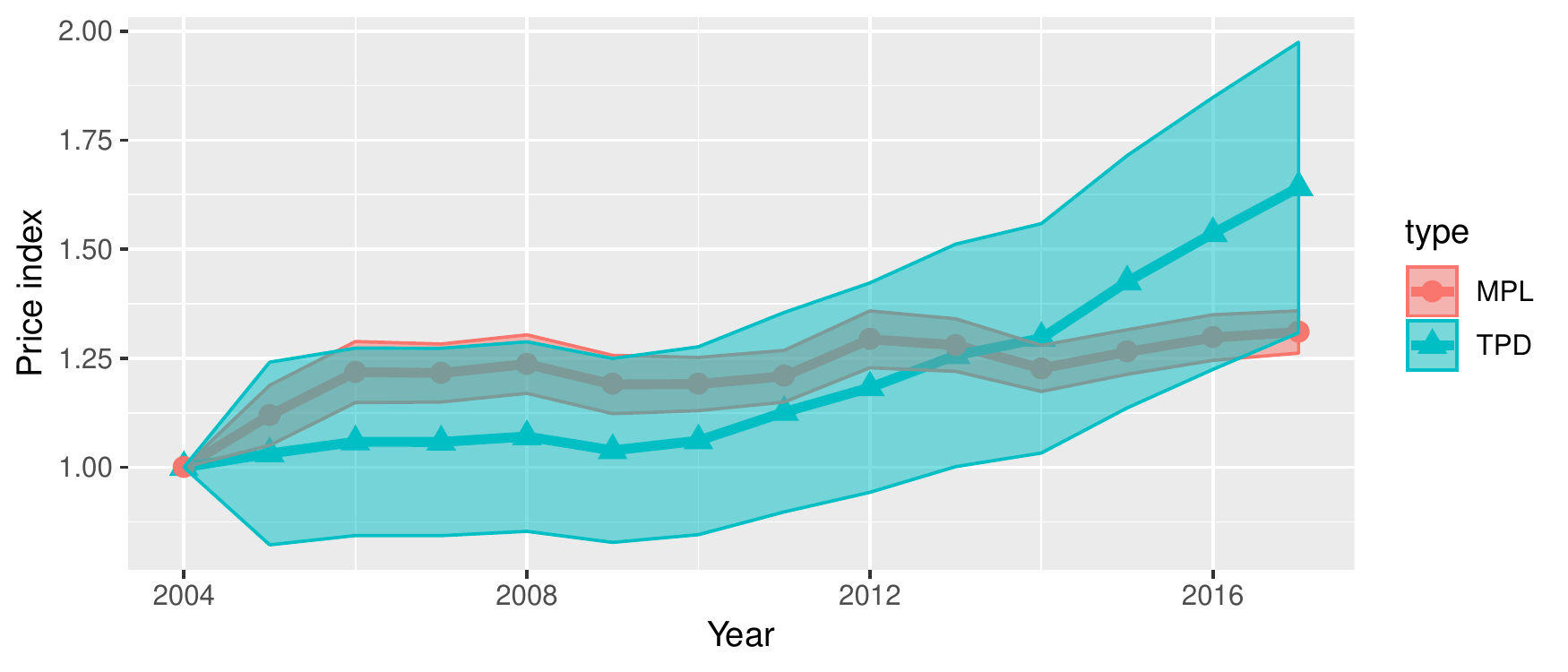} 
\includegraphics[scale=0.42]{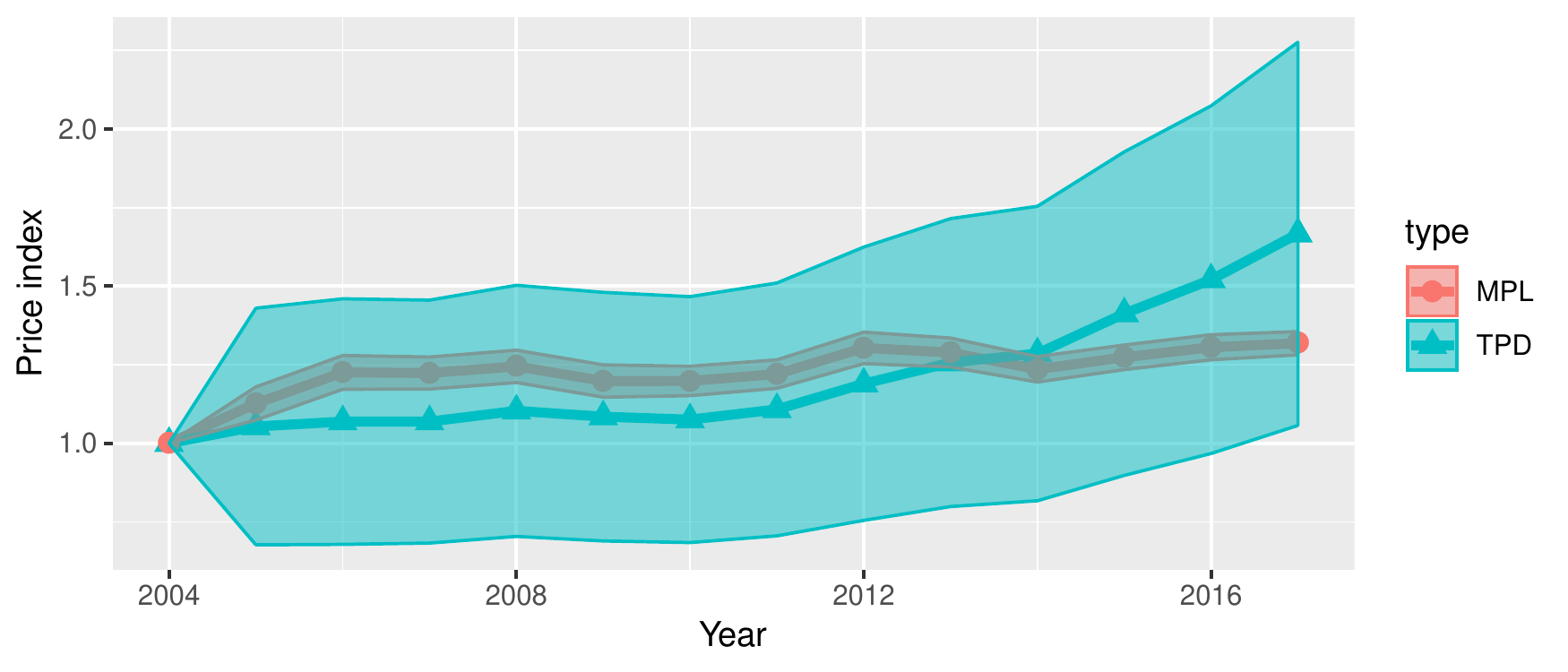}
\caption{MPL and TPD indexes obtained from simulated data generated by adding to $\boldsymbol{V}_1$ random terms drawn from a Normal law with a mean equal to 20000 and a standard error varying randomly from 0 to 1000. The first and second panel respectively refer to a complete and an incomplete price tableau scenario. 
}
\label{fig:sim1}
\end{center}
\end{figure}
Then, simulated values, from the $2^{nd}$ to the $T^{th}$ period (base period values, $\boldsymbol{v}_1$, being kept fixed), were obtained from simulated values of the previous period with the addition of error terms drawn from Normal laws with given means and variances. 
Plots in Figure~\ref{fig:sim2} show both the MPL and the TPD indexes, for the case of complete and incomplete price tableaus, together with the associated $3\sigma$ confidence bounds.  
In both cases, the MPL estimates are in line with the TPD ones, but are more accurate than the latter as their tighter confidence bounds show. 
\begin{figure}[htbp!]
\begin{center}
\includegraphics[scale=0.42]{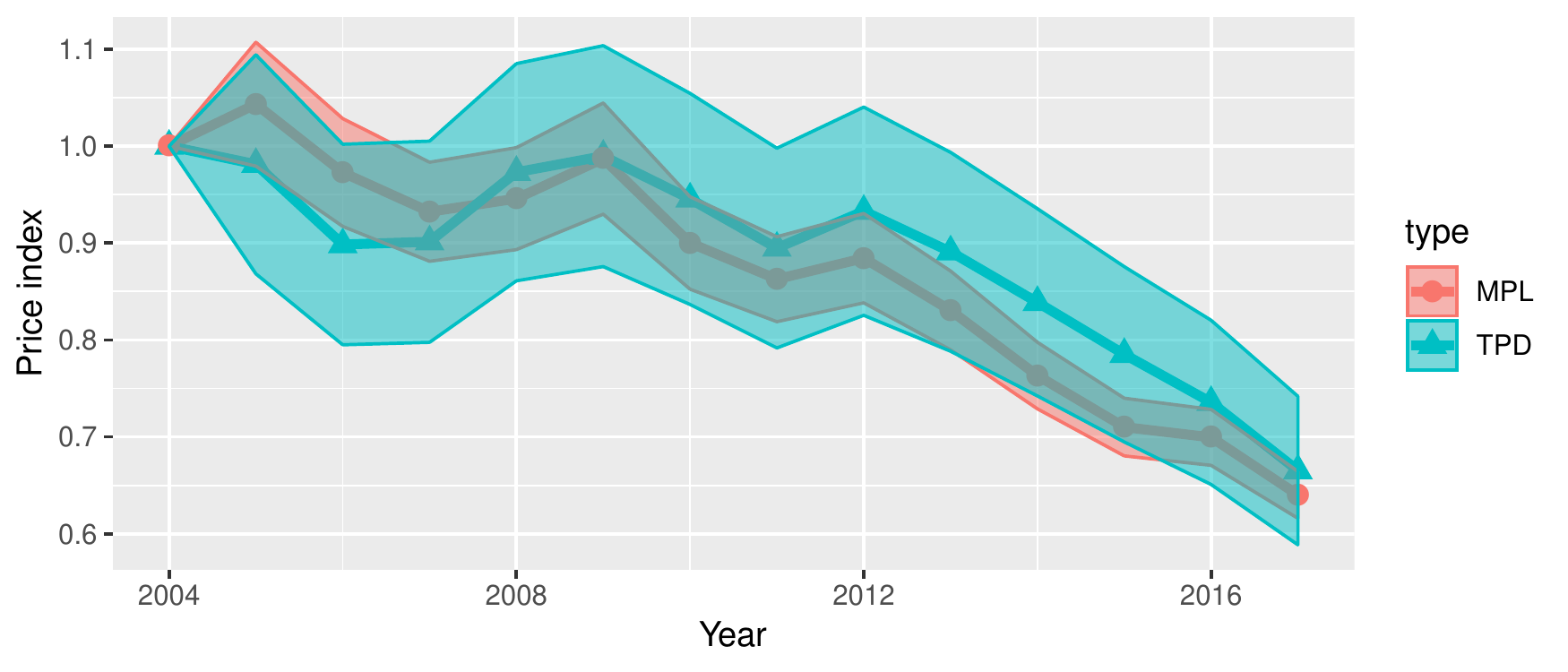} 
\includegraphics[scale=0.42]{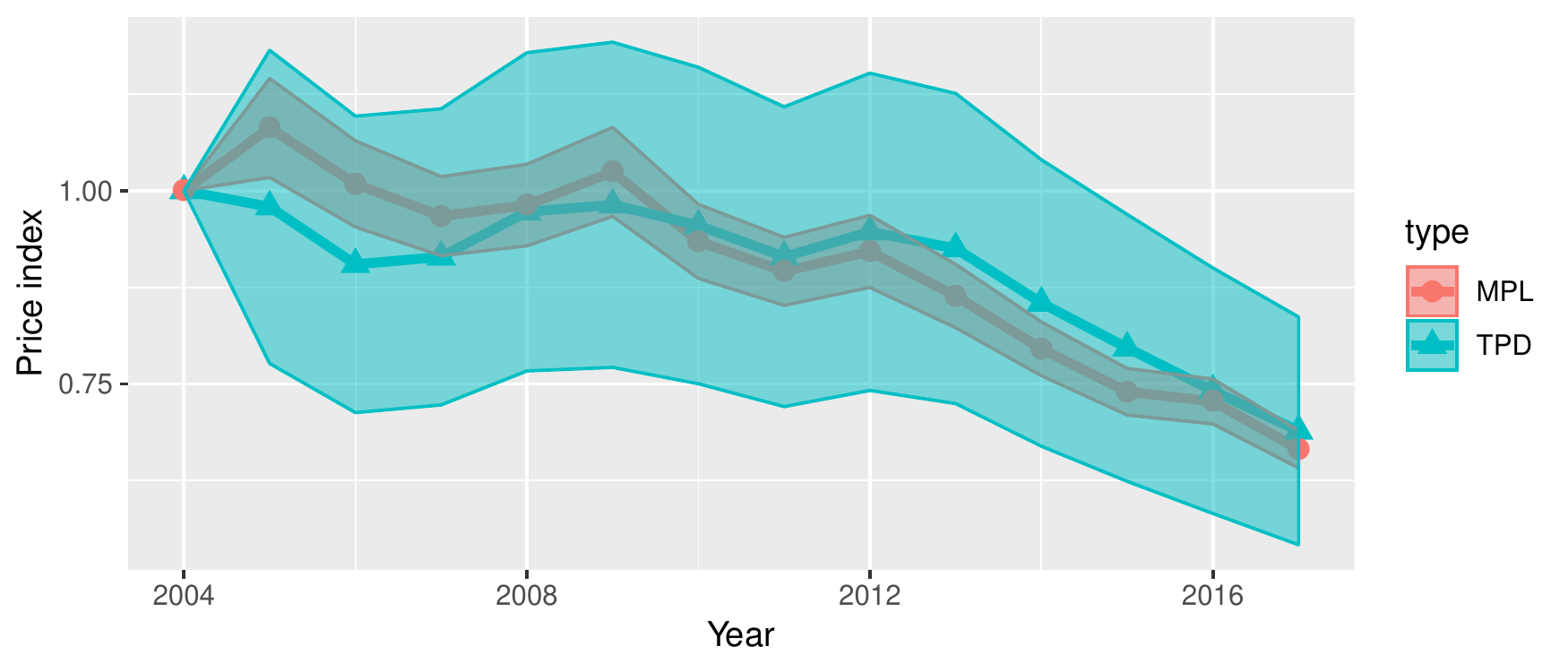}
\caption{MPL and TPD indexes obtained from simulated values $\boldsymbol{\upsilon}_t$ obtained by adding stochastic terms drawn from a Normal law with a mean equal to -5000 and a standard error varying randomly from 0 to 800 to $\boldsymbol{\upsilon}_{t-1}$, for $t=2,...T$. The first and second panel respectively refer to a complete and an incomplete price tableau scenario. }
\label{fig:sim2}
\end{center}
\end{figure}

\section{Conclusion}\label{sec:conclusion}
The paper works out a novel price index that can be used either as a multi-period or as a multilateral index. This index, called MPL index, is obtained as a solution to an ``ad hoc'' minimum-norm criterion, within the framework of the stochastic approach. The computation of the  MPL index does not require the knowledge of commodity prices, but only their quantities and values. 
The  reference basket of the MPL index, over periods or across countries, is more informative and complete than the ones commonly used by statistical agencies, and easy to update. The updating process is twofold depending on the multi-period or the multilateral use of the index. An application of the MPL index to the Italian cultural supply data provides proof of its positive performance. A comparison between the MPL and the CPD/TPD index on both real and simulated data 
provides evidence of the greater accuracy of the MPL estimates. 

%
\clearpage

\bibliographystyle{apalike}
\bibliography{arXiv_MPL_300919}

%

\newpage

\appendix\label{app}

\section*{Appendix}\label{app:proofs}
\setcounter{equation}{0} \renewcommand{\theequation}{A.\arabic{equation}}

\section{Proofs of Theorems and Corollaries}
\subsection{Proof of Theorem~\ref{th:1}}\label{app:proofs1}
\begin{proof}
The optimization problem is
\begin{equation}\label{eq:min}
 \min_{\lambda}\,\,\, \boldsymbol{e}'\boldsymbol{A}\boldsymbol{e}.
 \end{equation}
 \noindent The first order condition for a minimum are
 \[\begin{split}\frac{\partial  \boldsymbol{e}'\boldsymbol{A}\boldsymbol{e}}{\partial \lambda}=0\end{split}\]
which lead to the solution
 \begin{equation}\label{eq:000}\lambda=\frac{\boldsymbol{p}_2'\boldsymbol{A}\boldsymbol{p}_1}{\boldsymbol{p}_1'\boldsymbol{A}\boldsymbol{p}_1}.\end{equation}

\noindent Setting $\boldsymbol{A}=(\boldsymbol{q}_1\,\boldsymbol{q}_1')$ in Eq.~\eqref{eq:000} yields the Laspeyeres index
 \begin{equation}
\lambda_L=\frac{\sum_{i=1}^N p_{i2}q_{i1}}{\sum_{i=1}^N p_{i1}q_{i1}}=\frac{\boldsymbol{p}_2'\boldsymbol{q}_1}{\boldsymbol{p}_1'\boldsymbol{q}_1}.
 \end{equation}
 
\noindent Setting $\boldsymbol{A}=(\boldsymbol{q}_2\,\boldsymbol{q}_2')$ in Eq.~\eqref{eq:000} yields the Paasche index
 \begin{equation}
\lambda_P=\frac{\sum_{i=1}^N p_{i2}q_{i2}}{\sum_{i=1}^N p_{i1}q_{i2}}=\frac{\boldsymbol{p}_2'\boldsymbol{q}_2}{\boldsymbol{p}_1'\boldsymbol{q}_2}.
 \end{equation}

\noindent Setting $\boldsymbol{A}=(\boldsymbol{q}_1+ \boldsymbol{q}_2)\,(\boldsymbol{q}_1+\boldsymbol{q}_2)'$ in  Eq.~\eqref{eq:000} yields the Marshall-Edgeworth index
 \begin{equation}
\lambda_{M-E}=\frac{\sum_{i=1}^N p_{i2}(q_{i1}+q_{i2})}{\sum_{i=1}^N p_{i1}(q_{i1}+q_{i2})}=\frac{\boldsymbol{p}_2'(\boldsymbol{q}_1+\boldsymbol{q}_2)}{\boldsymbol{p}_1'(\boldsymbol{q}_1+\boldsymbol{q}_2)}.
 \end{equation}

\noindent Finally, setting $\boldsymbol{A}=(\tilde{\boldsymbol{q}}\tilde{\boldsymbol{q}}')$ in Eq.~\eqref{eq:000}, where  $\tilde{\boldsymbol{q}}$ is a vector whose elements are the square roots of the entries of the vector $(\boldsymbol{q}_1*\boldsymbol{q}_2)$, leads to the Walsh index
 \begin{equation}
\lambda_{W}=\frac{\sum_{i=1}^N p_{i2}(q_{i1}q_{i2})^{\frac{1}{2}}}{\sum_{i=1}^N p_{i1}(q_{i1}+q_{i2})^{\frac{1}{2}}}=\frac{\boldsymbol{p}_2'\tilde{\boldsymbol{q}}}{\boldsymbol{p}_1'\tilde{\boldsymbol{q}}}.
 \end{equation}
 \end{proof}

\subsection{Proof of Theorem~\ref{th:2}}\label{app:proofs2}

\begin{proof}
In compact form, the model in Eq.~\eqref{eq:sist1} can be written as
\begin{equation}\label{eq:reg}
\underset{(NT, 1)}{\boldsymbol{y}}=\underset{(NT, N+T-1)}{\boldsymbol{X}_{}}\underset{(N+T-1, 1)}{\boldsymbol{\beta}} + \underset{(NT, 1)}{\boldsymbol{\mu}}
\end{equation} 
where 
\begin{equation}\label{eq:reg1}
\underset{(NT,1)}{\boldsymbol{y}}=\begin{bmatrix} \underset{(N,1)}{\boldsymbol{v}_1}\\\underset{(N(T-1),1)}{\boldsymbol{0}}
\end{bmatrix}, \,\,\underset{(NT,N+T-1)}{\boldsymbol{X}_{}}=\begin{bmatrix} \underset{(N,T-1)}{\boldsymbol{0}} &  \underset{(N,N)}{(\boldsymbol{q}'_1\otimes \boldsymbol{I}_N)\boldsymbol{R}_N'}\\ \underset{(N(T-1),T-1)}{(\boldsymbol{I}_{T-1}\otimes (-\boldsymbol{V}_1))\boldsymbol{R}_{T-1}'} & \underset{(N(T-1),N)}{(\boldsymbol{Q}'_1\otimes \boldsymbol{I}_N)\boldsymbol{R}_N'}
\end{bmatrix}
\end{equation}
and
\[
\underset{(N+T-1,1)}{\boldsymbol{\beta}}=\begin{bmatrix} \underset{(T-1,1)}{{\boldsymbol{\delta}}}\\\underset{(N,1)}{\tilde{\boldsymbol{p}}}
\end{bmatrix}, \,\,\underset{(NT,1)}{\boldsymbol{\mu}}=\begin{bmatrix} \underset{(N,1)}{\boldsymbol{\varepsilon}_1} \\ \underset{(N(T-1),1)}{\boldsymbol{\eta}}
\end{bmatrix}.
\]

\noindent The ordinary least square estimator of the vector $\boldsymbol{\beta}$ is given by
\begin{equation}
\label{eq:beta}
\hat{\boldsymbol{\beta}}=(\boldsymbol{X}'\boldsymbol{X})^{-1}\boldsymbol{X}'\boldsymbol{y}
\end{equation}
where
\begin{equation}
\label{eq:xp}
\begin{split}
\underset{(N+T-1,N+T-1)}{(\boldsymbol{X}'\boldsymbol{X})}&=
\begin{bmatrix} \boldsymbol{R}_{T-1}(\boldsymbol{I}_{T-1}\otimes\boldsymbol{V}_1'\boldsymbol{V}_1) \boldsymbol{R}_{T-1}' & \boldsymbol{R}_{T-1}(\boldsymbol{Q}_1'\otimes (-\boldsymbol{V}_1'))\boldsymbol{R}'_{N}
\\ \boldsymbol{R}_{N}(\boldsymbol{Q}_1\otimes (-\boldsymbol{V}_1))\boldsymbol{R}_{T-1}' & \boldsymbol{R}_{N} ((\boldsymbol{q}_1\boldsymbol{q}_1'+\boldsymbol{Q}_1\boldsymbol{Q}_1')\otimes \boldsymbol{I}_N)\boldsymbol{R}_N'
\end{bmatrix}
\\
&=\begin{bmatrix}\boldsymbol{I}_{T-1}*\boldsymbol{V}_1'\boldsymbol{V}_1 & -\boldsymbol{Q}_1'*\boldsymbol{V}_1'\\
-\boldsymbol{Q}_1*\boldsymbol{V}_1 &  (\boldsymbol{q}_1\boldsymbol{q}_1'+\boldsymbol{Q}_1\boldsymbol{Q}_1')* \boldsymbol{I}_N
\end{bmatrix}
\end{split}
\end{equation}
and
\begin{equation}
\label{eq:xy1}
\underset{(N+T-1,1)}{\boldsymbol{X}'\boldsymbol{y}}=
\begin{bmatrix}
\underset{(T-1,1)}{\boldsymbol{0}}\\\underset{(N,1)}{\boldsymbol{R}_N(\boldsymbol{q}_1\otimes \boldsymbol{I}_N)\boldsymbol{v}_1}\end{bmatrix}\footnote{Note that 
\[\boldsymbol{R}_N(\boldsymbol{q}_1\otimes \boldsymbol{I}_N)\boldsymbol{v}_1=\boldsymbol{R}_N(\boldsymbol{q}_1\otimes \boldsymbol{I}_N)(\boldsymbol{I}_1\otimes\boldsymbol{v}_1)\boldsymbol{R}_1'=\boldsymbol{R}_N(\boldsymbol{q}_1\otimes \boldsymbol{v}_1)\boldsymbol{R}_1'=\boldsymbol{q}_1* \boldsymbol{v}_1
 \,\,\,\mbox{where} \,\,\,\boldsymbol{R}_1=\underset{(1,1)}{\boldsymbol{e}'_1}\otimes \underset{(1,1)}{\boldsymbol{e}'_1}=1\].
}
= 
\begin{bmatrix}
\underset{(T-1,1)}{\boldsymbol{0}}\\\underset{(N,1)}{\boldsymbol{q}_1 *\boldsymbol{v}_1}
\end{bmatrix}.\footnote{Use has been made of the following relationship between the Kronecker and the Hadamard product \citep{faliva1996hadamard} \[\underset{(N,M)}{\boldsymbol{A}} * \underset{(N,M)}{\boldsymbol{B}}=\boldsymbol{R}_N(\boldsymbol{A}\otimes\boldsymbol{B})\boldsymbol{R}'_M\] to obtain the right-hand sides of Eq.~\eqref{eq:xp} and \eqref{eq:xy1}.}
\end{equation}
Accordingly,
\begin{equation}
\hat{{\boldsymbol{\delta}}}=\begin{bmatrix}\boldsymbol{I}_{T-1}\,\,\,\underset{(T-1,N)}{\boldsymbol{0}}
\end{bmatrix}\underset{(N+T-1,1)}{\hat{\boldsymbol{\beta}}}=
\boldsymbol{\Lambda}_{12}(\boldsymbol{q}_1* \boldsymbol{v}_1)
\label{eq:deltt}
\end{equation}
where $\boldsymbol{\Lambda}_{12}$  is the upper off diagonal block of the inverse matrix
\[\underset{(N+T-1,N+T-1)}{(\boldsymbol{X}'\boldsymbol{X})^{-1}}
=\boldsymbol{\Lambda}= \begin{bmatrix}\underset{(T-1,T-1)}{\boldsymbol{\Lambda}_{11}}&\underset{(T-1,N)}{\boldsymbol{\Lambda}_{12}}\\
\underset{(N,T-1)}{\boldsymbol{\Lambda}_{21}} & \underset{(N,N)}{\boldsymbol{\Lambda}_{22}}
\end{bmatrix}.
\]
Partitioned inversion \citep[see][]{faliva2008dynamic} leads to 
\begin{footnotesize}\[\begin{split}\boldsymbol{\Lambda}_{12}=\left\{
(\boldsymbol{I}_{T-1}*\boldsymbol{V}_1'\boldsymbol{V}_1)-(\boldsymbol{Q}_1'*\boldsymbol{V}_1')\left[
(\boldsymbol{q}_1\boldsymbol{q}_1'+\boldsymbol{Q}_1\boldsymbol{Q}_1')*\boldsymbol{I}_N
\right]^{-1}(\boldsymbol{Q}_1*\boldsymbol{V}_1)
\right\}^{-1}(\boldsymbol{Q}_1'*\boldsymbol{V}_1')\left[
(\boldsymbol{q}_1\boldsymbol{q}_1'+\boldsymbol{Q}_1\boldsymbol{Q}_1')*\boldsymbol{I}_N
\right]^{-1}
\end{split}\]\end{footnotesize}
and this yields Eq.~\eqref{eq:delta}.
\end{proof}
%
%
\subsection{Proof of Corollary~\ref{cor:1}}\label{app:proofs3}
\begin{proof} When $T=2$, $\boldsymbol{Q}_1=\underset{(N,1)}{\boldsymbol{q}_2}$ and $\boldsymbol{V}_1=\underset{(N,1)}{\boldsymbol{v}_2}$. Accordingly, the following holds
\begin{equation}
(\boldsymbol{I}_{T-1}*\boldsymbol{V}_1'\boldsymbol{V}_1)=\boldsymbol{v}_2'\boldsymbol{v}_2.
\end{equation}
\noindent Then, upon noting that
\[
\left[
(\boldsymbol{q}_1\boldsymbol{q}_1'+\boldsymbol{q}_2\boldsymbol{q}_2')*\boldsymbol{I}_N
\right]^{-1}=\begin{bmatrix} \frac{1}{q_{11}^2+q_{12}^2} & 0 & \dots &0 \\ 0 & \frac{1}{q_{21}^2+q_{22}^2}  & \dots &0
\\ \vdots & \vdots & \ddots &\vdots\\ 0 & 0 & \dots &\frac{1}{q_{N1}^2+q_{N2}^2} \end{bmatrix}
\]
where $q_{it}$ denotes the quantity of the $i^{th}$ good at time $t$, some computations prove that
\[\begin{split}\hat{{\delta}}=&\left\{(
\boldsymbol{v}_2'\boldsymbol{v}_2)-(\boldsymbol{q}_2'*\boldsymbol{v}_2')\left[
(\boldsymbol{q}_1\boldsymbol{q}_1'+\boldsymbol{q}_2\boldsymbol{q}_2')*\boldsymbol{I}_N
\right]^{-1}(\boldsymbol{q}_2*\boldsymbol{v}_2)
\right\}^{-1}(\boldsymbol{q}_2'*\boldsymbol{v}_2')\left[
(\boldsymbol{q}_1\boldsymbol{q}_1'+\boldsymbol{q}_2\boldsymbol{q}_2')*\boldsymbol{I}_N
\right]^{-1}(\boldsymbol{q}_1* \boldsymbol{v}_1)\\=&
\left\{
\boldsymbol{v}_2'\boldsymbol{v}_2-[q_{12}v_{12},\,\cdots,\, q_{N2}v_{N2}] 
\begin{bmatrix} \frac{1}{q_{11}^2+q_{12}^2} & 0 & \dots &0 \\ 0 & \frac{1}{q_{21}^2+q_{22}^2}  & \dots &0
\\ \vdots & \vdots & \ddots &\vdots\\ 0 & 0 & \dots &\frac{1}{q_{N1}^2+q_{N2}^2} \end{bmatrix}
\begin{bmatrix}
q_{12}v_{12}\\
\vdots\\
q_{N2}v_{N2}
\end{bmatrix}
\right\}^{-1}
\\& [q_{12}v_{12},\,\cdots,\, q_{N2}f_{N2}] 
\begin{bmatrix} \frac{1}{q_{11}^2+q_{12}^2} & 0 & \dots &0 \\ 0 & \frac{1}{q_{21}^2+q_{22}^2}  & \dots &0
\\ \vdots & \vdots & \ddots &\vdots\\ 0 & 0 & \dots &\frac{1}{q_{N1}^2+q_{N2}^2} \end{bmatrix}
\begin{bmatrix}
q_{11}v_{11}\\
\vdots\\
q_{N1}v_{N1}
\end{bmatrix}=\\
=&
\left\{
\sum_{i=1}^Nv^2_{i2}-\sum_{i=1}^N\frac{q^2_{i2}v^2_{i2}}{q^2_{i1}+q^2_{i2}}
\right\}^{-1}
\left\{
\sum_{i=1}^N\frac{q_{i2}v_{i2}q_{i1}v_{i1}}{q^2_{i1}+q^2_{i2}}
\right\}=\left\{\sum_{i=1}^N\frac{q^2_{i1}v^2_{i2}}{q^2_{i1}+q^2_{i2}}
\right\}^{-1}
\left\{
\sum_{i=1}^N\frac{q_{i2}v_{i2}q_{i1}v_{i1}}{q^2_{i1}+q^2_{i2}}
\right\}.
\end{split}\]
The reciprocal of the deflator $\hat{{\delta}}$ yields the intended price index 
\[
\hat{{\lambda}}=\left\{\sum_{i=1}^N\frac{q^2_{i1}v^2_{i2}}{q^2_{i1}+q^2_{i2}}
\right\}
\left\{
\sum_{i=1}^N\frac{q_{i2}v_{i2}q_{i1}v_{i1}}{q^2_{i1}+q^2_{i2}}
\right\}^{-1}=\frac{\sum_{i=1}^Np_{i2}\pi_i}{\sum_{i=1}^Np_{i1}\pi_i}
\] where 
\begin{equation}
\label{eq:ttp}
\pi_i=p_{i2}\frac{q_{i1}^2q_{i2}^2}{q_{i1}^2+q_{i2}^2}.
\end{equation}
\noindent Multiplying and dividing  the numerator of Eq.~\eqref{eq:ttp} by $p_{i1}$, the index can be also written as 
\begin{equation}
\sum_{i=1}^N\frac{ p_{i2}}{p_{i1}} \tilde{\pi}_{i}
\end{equation}
that is as a convex linear  combination of prices with weights given by 
\begin{equation}
\tilde{\pi}_{i}=\frac{v_{i1}v_{i2}\,\frac{q_{i1}q_{i2}}{q_{i1}^2+q_{i2}^2}}{\sum_{i=1}^N v_{i1}v_{i2}\,\frac{q_{i1}q_{i2}}{q_{i1}^2+q_{i2}^2}}.
\end{equation}

\noindent Further, the index $\hat{{\lambda}}$ can be also rewritten in a compact form as follows
\begin{equation}
\label{eq:comp}
\hat{{\lambda}}=\frac{\left(\boldsymbol q_1\ast \boldsymbol v_2\right)'\boldsymbol D^{-1}\left(\boldsymbol q_1\ast \boldsymbol v_2\right)}{\left(\boldsymbol q_2\ast \boldsymbol v_2\right)'\boldsymbol D^{-1}\left(\boldsymbol q_1\ast \boldsymbol v_1\right)}
\end{equation}
where
\begin{equation}
\boldsymbol D=\begin{bmatrix} q_{11}^2+q_{12}^2 & 0 & \dots &0 \\ 0 & q_{21}^2+q_{22}^2  & \dots &0
\\ \vdots & \vdots & \ddots &\vdots\\ 0 & 0 & \dots &q_{N1}^2+q_{N2}^2 \end{bmatrix}.
\end{equation}
By setting $\boldsymbol D^{-1/2}\boldsymbol q_t=\boldsymbol \tilde{\boldsymbol{q}_t} $ and by making use of the properties of the Hadamard product\footnote{Let $\boldsymbol D$ be a diagonal matrix. Then, simple computations show that $\boldsymbol D\left(\boldsymbol {a}\ast \boldsymbol{b}\right)= \left(\boldsymbol D \boldsymbol {a} \ast \boldsymbol{b}\right)$.}, the intended price index may be eventually written as follows
\begin{equation}
\label{eq:comp1}
\hat{{\lambda}}=\frac{\left(\tilde{\boldsymbol q}_1\ast \boldsymbol v_2\right)'\left(\tilde{\boldsymbol q}_1\ast \boldsymbol v_2\right)}{\left(\tilde{\boldsymbol q}_2\ast \boldsymbol v_2\right)'\left(\tilde{\boldsymbol q}_1\ast \boldsymbol v_1\right)}.
\end{equation}
\end{proof}

\subsection{Proof of Corollary~\ref{cor:3}}\label{app:proofs6} 
\begin{proof}
The variance-covariance matrix of the estimator $\boldsymbol {\widehat{\beta}}$, given in Eq.~\eqref{eq:beta}, is 
\begin{equation}
\label{eq:var}
Var(\boldsymbol {\widehat{\beta}})= \sigma^2[\boldsymbol{X}'\boldsymbol{X}]^{-1}
\end{equation}
where $(\sigma^2\boldsymbol{I}_{NT})$ is the variance-covariance matrix of the stochastic vector $\boldsymbol{\mu}$ given in Eq.~\eqref{eq:reg}.
According to Eq.~\eqref{eq:var}, the variance-covariance matrix of the vector $\boldsymbol{\hat {\delta}}$, as in Eq.~\eqref{eq:deltt}, turns out to be 
\begin{equation}
\label{eq:var1}
V(\boldsymbol {\hat{\delta}})=\sigma^2 [\boldsymbol{I}_{T-1}, \boldsymbol{0}_{T-1,N}][\boldsymbol{X}'\boldsymbol{X}]^{-1}\begin{bmatrix}
\boldsymbol {I}_{T-1}\\
\boldsymbol{0}_{N,T-1}
\end{bmatrix}
\end{equation}
which, with some computations and taking into account Eq.~\eqref{eq:xp}, can be worked out as follows
\begin{equation}
\label{eq:var2}
V(\boldsymbol{\hat{\delta}})=\sigma^2[\boldsymbol{I}_{T-1}*\boldsymbol{V}_1'\boldsymbol{V}_1]^{-1}. 
\end{equation}

\noindent The $t^{th}$ diagonal entries of the above matrix is
\begin{equation}
\label{eq:var6}
var({\hat{\delta}_{t,t}})=\sigma^2\left[\boldsymbol{v}_t'\boldsymbol{v}_t\right]^{-1},
\end{equation}
\noindent where $\boldsymbol {v}_t$ denotes the $t^{th}$ column of the matrix $\boldsymbol{V}_1$.   
\end{proof}

\subsection{Proof of Theorem~\ref{th:4}}\label{app:proofs7} 
\begin{proof}
When the values, $\boldsymbol{v}_{T+1}$, and the quantities, $\boldsymbol{q}_{T+1}$, of  $N$ commodities in a reference basket become available for the $(T+1)^{th}$ additional country, the reference equation system for updating the MPL index becomes

\begin{equation}
\begin{cases}
&\boldsymbol{V}_{}\, \, \boldsymbol{D}_{\boldsymbol{\delta}}=  \boldsymbol{D}_{\tilde{\boldsymbol{p}}}\, \, \boldsymbol{Q}_{} + \boldsymbol{E}_{}
\\& \boldsymbol{v}_{T+1}{\delta}_{T+1}=\boldsymbol{D}_{\tilde{\boldsymbol{p}}}\boldsymbol{q}_{T+1}+\boldsymbol{\varepsilon}_{T+1}
\end{cases}
\rightarrow
\begin{cases}
& \boldsymbol{v}_1=\boldsymbol{D}_{\tilde{\boldsymbol{p}}}\boldsymbol{q}_1+\boldsymbol{\varepsilon}_1
\\& \boldsymbol{V}_1 \tilde{\boldsymbol{D}}_{\boldsymbol{\delta}}=\boldsymbol{D}_{\tilde{\boldsymbol{p}}}\boldsymbol{Q}_1+\boldsymbol{E}_1
\\& \boldsymbol{v}_{T+1}{\delta}_{T+1}=\boldsymbol{D}_{\tilde{\boldsymbol{p}}}\boldsymbol{q}_{T+1}+\boldsymbol{\varepsilon}_{T+1}
\end{cases}.
\end{equation}
After some computations, the above system can be also written as
\begin{equation}
\begin{cases}
&  \boldsymbol{v}_1=(\boldsymbol{q}_1' \otimes \boldsymbol{I}_N) \boldsymbol{R}_N' \tilde{\boldsymbol{p}}+\boldsymbol{\varepsilon}_1
\\
& \underset{(N(T-1),1)}{\boldsymbol{0}}=(\boldsymbol{I}_{T-1}\otimes (\boldsymbol{-V}_1))\boldsymbol{R}_{T-1}'\boldsymbol{\delta}+(\boldsymbol{Q}_1'\otimes \boldsymbol{I}_N)\boldsymbol{R}_N'\tilde{\boldsymbol{p}}+\boldsymbol{\eta}\\
& \boldsymbol{0}=-\boldsymbol{v}_{T+1}\delta_{T+1}+ (\boldsymbol{q}_{T+1}' \otimes \boldsymbol{I}_N) \boldsymbol{R}_N' \tilde{\boldsymbol{p}}+\boldsymbol{\varepsilon}_{T+1}
\end{cases}
\end{equation}
or, in compact form, as
\begin{equation}\label{eq:regu}
\underset{(NT+N, 1)}{\boldsymbol{y}_u}=\underset{(NT+N, N+T)}{\boldsymbol{X}_u}\underset{(N+T, 1)}{\boldsymbol{\beta}_u} + \underset{(NT+N, 1)}{\boldsymbol{\mu}_u}
\end{equation} where 
\begin{equation}\label{eq:reg1}
\underset{(NT+N,1)}{\boldsymbol{y}_u}=\begin{bmatrix} \underset{(N,1)}{\boldsymbol{v}_1}\\\underset{(N(T-1),1)}{\boldsymbol{0}}\\\underset{(N,1)}{\boldsymbol{0}}
\end{bmatrix}, \,\,\underset{(NT+N,N+T)}{\boldsymbol{X}_u}
=\begin{bmatrix} 
\underset{(N,T-1)}{\boldsymbol{0}} &  \underset{(N,1)}{\boldsymbol{0}}&\underset{(N,N)}{(\boldsymbol{q}_1'\otimes \boldsymbol{I}_N)\boldsymbol{R}_N'}\\ 
\underset{(N(T-1),T-1)}{(\boldsymbol{I}_{T}\otimes - \boldsymbol{V}_1)\boldsymbol{R}_{T-1}'} & \underset{(N(T-1),1)}{\boldsymbol{0}}&\underset{(N(T-1),N)}{(\boldsymbol{Q}_1'\otimes \boldsymbol{I}_N)\boldsymbol{R}_N'}\\ 
\underset{(N,T-1)}{\boldsymbol{0}}&\underset{(N,1)}{-\boldsymbol{v}_{T+1}}  & \underset{(N,N)}{(\boldsymbol{q}_{T+1}'\otimes \boldsymbol{I}_N)\boldsymbol{R}_N'}
\end{bmatrix}
\end{equation}
and
\[
\underset{(N+T,1)}{\boldsymbol{\beta}_u}=\begin{bmatrix} \underset{(T-1,1)}{{\boldsymbol{\delta}}}\\\underset{(1,1)}{{\delta_{T+1}}}\\\underset{(N,1)}{\tilde{\boldsymbol{p}}}
\end{bmatrix}, \,\,\underset{(NT+N,1)}{\boldsymbol{\mu}_u}=\begin{bmatrix} \underset{(N,1)}{\boldsymbol{\varepsilon}_1} \\ \underset{(N(T-1),1)}{\boldsymbol{\eta}}\\ \underset{(N,1)}{\boldsymbol{\varepsilon}_{T+1}}
\end{bmatrix}.
\]

\noindent Then, following the same argument of Theorem~\ref{th:2}, we obtain that

\begin{equation}
\underset{(N+T,N+T)}{\boldsymbol{X}_u'\boldsymbol{X}_u}
=\begin{bmatrix} 
\boldsymbol{I}_{T-1}*\boldsymbol{V}_1\boldsymbol{V}_1' & \boldsymbol{0} &-\boldsymbol{Q}_1'*\boldsymbol{V}_1'
\\ 
\boldsymbol{0}&\boldsymbol{v}_{T+1}'\boldsymbol{v}_{T+1}&-\boldsymbol{v}_{T+1}'*\boldsymbol{q}_{T+1}'\\
-\boldsymbol{Q}_1*\boldsymbol{V}_1&-\boldsymbol{v}_{T+1}*\boldsymbol{q}_{T+1}&({\boldsymbol{q}}_1{\boldsymbol{q}}_1'+{\boldsymbol{Q}}_1{\boldsymbol{Q}}_1'+\boldsymbol{q}_{T+1}\boldsymbol{q}_{T+1}')*\boldsymbol{I}_N
\end{bmatrix}
\end{equation}
and
\begin{equation}
\underset{(N+T,1)}{\boldsymbol{X}_u'\boldsymbol{y}_u}
=\begin{bmatrix} 
\underset{(T,1)}{\boldsymbol{0}}\\\underset{(N,1)}{\boldsymbol{q}_1*\boldsymbol{v}_1}
\end{bmatrix}.
\end{equation}
Upon nothing that 
\begin{equation}
\begin{bmatrix}
\underset{(T-1,1)}{\hat{\boldsymbol{\delta}}}
\\\underset{(1,1)}{\hat{\delta}_{T+1}}
\end{bmatrix}=
\begin{bmatrix}
\boldsymbol{I}_T&\underset{(T,N)}{\boldsymbol{0}}
\end{bmatrix} \underset{(N+T,1)}{\boldsymbol{\hat{\beta}_u}} = \boldsymbol{\Lambda_{12}}\left(\boldsymbol{q}_1\ast\boldsymbol{v}_1\right)
\end{equation}
where $\boldsymbol{\Lambda}_{12}$ is the upper off diagonal block of the inverse matrix $(\boldsymbol{X}_u'\boldsymbol{X}_u)^{-1}$
\[\underset{(N+T,N+T)}{(\boldsymbol{X}_u'\boldsymbol{X}_u)^{-1}}
=\boldsymbol{\Lambda}=\begin{bmatrix}\underset{(T,T)}{\boldsymbol{\Lambda}_{11}}&\underset{(T,N)}{\boldsymbol{\Lambda}_{12}}\\
\underset{(N,T)}{\boldsymbol{\Lambda}_{21}} & \underset{(N,N)}{\boldsymbol{\Lambda}_{22}}
\end{bmatrix},
\]
partitioned inversion leads to 
\begin{footnotesize}
\begin{equation}
\begin{split}
\underset{(T,N)}{\boldsymbol{\Lambda}_{12}}=&\\
=&\left\{
\begin{bmatrix} 
\boldsymbol{I}_{T-1}\ast\boldsymbol{V}'_1\boldsymbol{V}_1 & \boldsymbol{0}  \\ 
\boldsymbol{0} & \boldsymbol{v}'_{T+1}\boldsymbol{v}_{T+1}  \end{bmatrix}-
\begin{bmatrix} 
\boldsymbol{Q}'_1\ast\boldsymbol{V}'_1  \\ 
\boldsymbol{v}'_{T+1}\ast\boldsymbol{q}'_{T+1}  \end{bmatrix}
\begin{bmatrix} 
\left(\boldsymbol{q}_1\boldsymbol{q}'_1+\boldsymbol{Q}_1\boldsymbol{Q}'_1 +\boldsymbol{q}_{T+1}\boldsymbol{q}'_{T+1}\right) \ast\boldsymbol{I}_N\end{bmatrix}^{-1} \right.
\\&
\left.\begin{bmatrix} 
\boldsymbol{Q}_1\ast\boldsymbol{V}_1  & 
\boldsymbol{v}_{T+1}\ast\boldsymbol{q}_{T+1}  \end{bmatrix}
\right\}^{-1}
\begin{bmatrix} 
\boldsymbol{Q}'_1\ast\boldsymbol{V}'_1  \\ 
\boldsymbol{v}'_{T+1}\ast\boldsymbol{q}'_{T+1}  \end{bmatrix}
\begin{bmatrix} 
\left(\boldsymbol{q}_1\boldsymbol{q}'_1+\boldsymbol{Q}_1\boldsymbol{Q}'_1 +\boldsymbol{q}_{T+1}\boldsymbol{q}'_{T+1}\right) \ast\boldsymbol{I}_N\end{bmatrix}^{-1}
.\end{split}
\end{equation}
\end{footnotesize}
Then, pre-multiplying  $(\boldsymbol{q}_1\ast\boldsymbol{v}_1)$ by $\boldsymbol{\Lambda}_{12}$ yields the estimator $\begin{bmatrix}
\underset{(T-1,1)}{\hat{\boldsymbol{\delta}}}
\\\underset{(1,1)}{\hat{\delta}_{T+1}}
\end{bmatrix}$. The reciprocal of the (non-null) elements of this estimator provides the values of the updated multilateral version of the MPL index. 
\end{proof}

\subsection{Proof of Theorem~\ref{th:5}}\label{app:proofs8} 
\begin{proof}
When the values, $\boldsymbol{v}_{T+1}$, and the quantities, $\boldsymbol{q}_{T+1}$, of $N$ commodities of a reference basket become available at time $T+1$, the updating of the multi-period version of the MPL index must not change its past values with meaningful computational advantages. In order to get the required updating formula, let us rewrite Eq.~\eqref{eq:1} as follows
\begin{equation}\label{eq:1proof8}
\left[\underset{(N, T)}{\boldsymbol{V}_{}}\, \, \underset{(N, 1)}{\boldsymbol{v}_{T+1}}\right]
\underset{(T,T)}{\boldsymbol{D}^*_{{{\boldsymbol{\delta}}}}} 
=  \underset{(N, N)}{\boldsymbol{D}_{\tilde{\boldsymbol{p}}}}\, \, \left[\underset{(N, T)}{\boldsymbol{Q}}\,\,\underset{(N,1)}{\boldsymbol{q}_{T+1}}\right] + \left[\underset{(N, T)}{\boldsymbol{E}_{}}\,\,\underset{(N,1)}{\varepsilon_{T+1}}
\right]
\end{equation}
where ${\boldsymbol{D}^*_{{{\boldsymbol{\delta}}}}}$ is specified as follows
\begin{equation}
\underset{(T+1,T+1)}{\boldsymbol{D}^{*}_{{{\boldsymbol{\delta}}}}}=
\begin{bmatrix}
\underset{(T,T)}{\hat{\boldsymbol{D}}_{{{\boldsymbol{\delta}}}}}& \underset{(T,1)}{\boldsymbol{0}}
\\
\underset{(1,T)}{\boldsymbol{0}'}&\underset{(1,1)}{{\delta}_{T+1}}
\end{bmatrix}.
\end{equation}
Here ${\hat{\boldsymbol{D}}_{{{\boldsymbol{\delta}}}}}$ denotes the estimate of ${\boldsymbol{D}}_{{{\boldsymbol{\delta}}}}$, defined as in Eq.~\eqref{eq:1b}, that is 
\begin{equation}
\underset{(T,T)}{\hat{\boldsymbol{D}}_{{{\boldsymbol{\delta}}}}}=
\begin{bmatrix}
\underset{(1,1)}{1}&\underset{(1,T-1)}{\boldsymbol{0}'}\\
\underset{(T-1,1)}{\boldsymbol{0}}&
\begin{bmatrix}
\hat{{\delta}}_2&0&\cdots&0\\
0&\hat{{\delta}}_3&\cdots&0\\
0&0&\cdots&0\\
0&0&\cdots&\hat{{\delta}}_T\\
\end{bmatrix}
\end{bmatrix}
=
\begin{bmatrix}
1&\boldsymbol{0}'\\
0&\hat{\tilde{\boldsymbol{D}_{\boldsymbol{\delta}}}}
\end{bmatrix}
\end{equation}
where the entries $\hat{{\delta}}_2$, $\hat{{\delta}}_3, \dots, {\hat{\delta}}_T$ are the elements of the vector $\hat{{\boldsymbol{\delta}}}$ given in Eq.~\eqref{eq:delta}. 
System in Eq.~\eqref{eq:1proof8} can be also written as
\begin{equation}
\label{eq:2proof8}
\begin{cases}
& \boldsymbol{v}_1=\boldsymbol{D}_{\tilde{\boldsymbol{p}}}\boldsymbol{q}_1+\boldsymbol{\varepsilon}_1
\\& \boldsymbol{V}_1 \hat{\tilde{\boldsymbol{D}}}_{\boldsymbol{\delta}}=\boldsymbol{D}_{\tilde{\boldsymbol{p}}}\boldsymbol{Q}_1+\boldsymbol{E}_1
\\& \boldsymbol{v}_{T+1}{\delta}_{T+1}=\boldsymbol{D}_{\tilde{\boldsymbol{p}}}\boldsymbol{q}_{T+1}+\boldsymbol{\varepsilon}_{T+1}
\end{cases}
\rightarrow\, 
\begin{cases} \boldsymbol{V}\,{\hat{\boldsymbol{D}}}_{\boldsymbol{\delta}} = \boldsymbol{D}_{\tilde{\boldsymbol{p}}} \boldsymbol{Q}+\boldsymbol{E}
\\
\boldsymbol{v}_{T+1}{\delta}_{T+1}=\boldsymbol{D}_{\tilde{\boldsymbol{p}}}\boldsymbol{q}_{T+1}+\varepsilon_{T+1}
\end{cases}.
\end{equation}
The application of the $vec$ operator to the first block of equations in Eq.~\eqref{eq:2proof8} yields
\begin{equation}
\label{eq:gtt}
\begin{cases}
\tilde{\boldsymbol{v}}_1=\left(\boldsymbol{Q}'\otimes \boldsymbol{I}_N\right)\boldsymbol{R}_N' \tilde{\boldsymbol{p}}+\boldsymbol{\eta}\\
\boldsymbol{0}=-\boldsymbol{v}_{T+1}{\delta}_{T+1}+\left(\boldsymbol{q}_{T+1}'\otimes \boldsymbol{I}_N\right)\boldsymbol{R}_N' \tilde{\boldsymbol{p}}+\varepsilon_{T+1}
\end{cases}
\end{equation}
where $\tilde{\boldsymbol{v}}_1=vec(\boldsymbol{V}\,{\hat{\boldsymbol{D}}}_{\boldsymbol{\delta}})=\left( \boldsymbol{I}_T \otimes \boldsymbol{V}\right)\boldsymbol{R}_T' \tilde{{\boldsymbol{\delta}}}$ with $\tilde{{\boldsymbol{\delta}}}'=[1,\hat{{\boldsymbol{\delta}}}]'$ \footnote{Note that, differently from the proof of Theorem~\ref{th:4}, the vector $\boldsymbol{\delta}$ does not enter in the updating estimation process as it is considered given.} and $\boldsymbol{\eta}$ is equal to $vec(\boldsymbol{E})$. 
Eq.~\eqref{eq:gtt} can  be written in vector form as

\begin{equation}\label{eq:reg}
\underset{(N(T+1), 1)}{\boldsymbol{y}_u}=\underset{(N(T+1), N+1)}{\boldsymbol{X}_{u}}\underset{(N+1, 1)}{\boldsymbol{\beta}_u} + \underset{(N(T+1), 1)}{\boldsymbol{\mu}_u}
\end{equation} where 

\begin{equation}\label{eq:reg1}
\underset{(N(T+1),1)}{\boldsymbol{y}_u}=\begin{bmatrix} \tilde{\boldsymbol{v}}_1\\\boldsymbol{0}
\end{bmatrix}, \,\,\underset{(N(T+1),N+1)}{\boldsymbol{X}_{u}}=\begin{bmatrix} \boldsymbol{0} &  (\boldsymbol{Q}'\otimes \boldsymbol{I}_N)\boldsymbol{R}_N'\\ -\boldsymbol{v}_{T+1} & (\boldsymbol{q}_{T+1}'\otimes \boldsymbol{I}_N)\boldsymbol{R}_N'
\end{bmatrix}
\end{equation}
and

\[
\underset{(N+1,1)}{\boldsymbol{\beta}_u}=\begin{bmatrix} {\delta}_{T+1}\\\tilde{\boldsymbol{p}}
\end{bmatrix}, \,\,\underset{(N(T+1),1)}{\boldsymbol{\mu}_u}=\begin{bmatrix} \boldsymbol{\eta} \\ \boldsymbol{\varepsilon}_{T+1}
\end{bmatrix}.
\]

\noindent The OLS estimator of the vector $\boldsymbol{\beta}_u$ is given by
\begin{equation}
\hat{\boldsymbol{\beta}_u}=(\boldsymbol{X}_u'\boldsymbol{X}_u)^{-1}\boldsymbol{X}_u'\boldsymbol{y}_u
\end{equation}
where
\begin{equation}
\label{eq:xx}
\begin{split}
\underset{(N+1,N+1)}{\boldsymbol{X}_u'\boldsymbol{X}_u}&=
\begin{bmatrix} 
\boldsymbol{v}_{T+1}'\boldsymbol{v}_{T+1}& -\boldsymbol{v}_{T+1}'\left(\boldsymbol{q}_{T+1}'\otimes \boldsymbol{I}_N\right)\boldsymbol{R}'_N
\\ -\boldsymbol{R}_{N}(\boldsymbol{q}_{T+1}\otimes (\boldsymbol{I}_N))\boldsymbol{v}_{T+1} & \boldsymbol{R}_{N} ((\boldsymbol{Q}\boldsymbol{Q}'+\boldsymbol{q}_{T+1}\boldsymbol{q}_{T+1}')\otimes \boldsymbol{I}_N)\boldsymbol{R}_N'
\end{bmatrix}
\\
&=\begin{bmatrix}\boldsymbol{v}_{T+1}'\boldsymbol{v}_{T+1} & -\boldsymbol{v}_{T+1}'\ast \boldsymbol{q}_{T+1}'\\
-\boldsymbol{q}_{T+1}\ast\boldsymbol{v}_{T+1} &  ((\boldsymbol{Q}\boldsymbol{Q}'+\boldsymbol{q}_{T+1}\boldsymbol{q}_{T+1}')\ast \boldsymbol{I}_N)
\end{bmatrix}
\end{split}
\end{equation}
and
\begin{equation}
\label{eq:xy}
\underset{(N+1,1)}{\boldsymbol{X}_u'\boldsymbol{y}_u}=
\begin{bmatrix}
\underset{(1,1)}{0}\\\underset{(N,1)}{\boldsymbol{R}_N(\boldsymbol{Q}\otimes \boldsymbol{I}_N)\tilde{\boldsymbol{v}}_1}
\end{bmatrix}
= 
\begin{bmatrix}
\underset{(1,1)}{0}\\
\underset{(N,1)}{\left(\boldsymbol{Q}\ast \boldsymbol{V}\right)\tilde{{\boldsymbol{\delta}}}}
\end{bmatrix}.
\end{equation}
Now, upon nothing that 
\begin{equation}
\\\underset{(1,1)}{\hat{\delta}_{T+1}}=
\begin{bmatrix}
1&\underset{(1,N)}{\boldsymbol{0}'}
\end{bmatrix} \underset{(N+1,1)}{\boldsymbol{\hat{\beta}_u}} = \boldsymbol{\Lambda_{12}}{\left(\boldsymbol{Q}\ast \boldsymbol{V}\right)\tilde{{\boldsymbol{\delta}}}},
\end{equation}
where $\boldsymbol{\Lambda}_{12}$ is the upper off diagonal block of the inverse matrix $(\boldsymbol{X}_u'\boldsymbol{X}_u)^{-1}$
\[\underset{(N+1,N+1)}{(\boldsymbol{X}_u'\boldsymbol{X}_u)^{-1}}
=\boldsymbol{\Lambda}= \begin{bmatrix}\underset{(1,1)}{\boldsymbol{\Lambda}_{11}}&\underset{(1,N)}{\boldsymbol{\Lambda}_{12}}\\
\underset{(N,1)}{\boldsymbol{\Lambda}_{21}} & \underset{(N,N)}{\boldsymbol{\Lambda}_{22}}
\end{bmatrix},
\]
partitioned inversion leads to 
\begin{equation}
\begin{split}
\boldsymbol{\Lambda}_{12}=&\left\{
\boldsymbol{v}_{T+1}'\boldsymbol{v}_{T+1}-\left(\boldsymbol{q}_{T+1}'\ast\boldsymbol{v}_{T+1}'\right)\left[
(\boldsymbol{Q}\boldsymbol{Q}'+\boldsymbol{q}_{T+1}\boldsymbol{q}_{T+1}')\ast\boldsymbol{I}_{N}
\right]^{-1}\left(\boldsymbol{q}_{T+1}\ast\boldsymbol{v}_{T+1}\right)
\right\}^{-1} \\&\left(\boldsymbol{q}_{T+1}'\ast\boldsymbol{v}_{T+1}'\right)\left[
(\boldsymbol{Q}\boldsymbol{Q}'+\boldsymbol{q}_{T+1}\boldsymbol{q}_{T+1}')\ast\boldsymbol{I}_{N}
\right]^{-1}.
\end{split}
\end{equation}
\noindent Then, pre-multiplying ${\left(\boldsymbol{Q}\ast \boldsymbol{V}\right)\tilde{{\boldsymbol{\delta}}}}$ by $\boldsymbol{\Lambda}_{12}$ yields the estimator of $\hat{\delta}_{T+1}$ given in Theorem~\ref{th:5}. The reciprocal of this estimator provides the updated value of the multi-period version of the MPL index. 
\end{proof}

\end{document}